\documentclass[11pt]{article}

\usepackage{fullpage,times,url}
\usepackage{natbib,mathrsfs}

\usepackage{amsthm,amsfonts,amsmath,amssymb,epsfig,color,float,graphicx,verbatim}
\usepackage{algorithm,algorithmic}

\usepackage{enumitem}

\usepackage{hyperref}
\hypersetup{
	colorlinks   = true,
	urlcolor     = blue,
	linkcolor    = blue,
	citecolor   = blue
}

\newtheorem{theorem}{Theorem}[section]
\newtheorem{proposition}[theorem]{Proposition}
\newtheorem{lemma}[theorem]{Lemma}
\newtheorem{corollary}[theorem]{Corollary}
\newtheorem{definition}{Definition}
\newtheorem{remark}[theorem]{Remark}

\newtheorem{lbconstruction}{Lower Bound Construction}

\newcommand{\reals}{\mathbb{R}}
\newcommand{\E}{\mathbb{E}}

\newcommand{\half}{\frac{1}{2}}

\newcommand{\be}{\mathbf{e}}
\newcommand{\bx}{\mathbf{x}}
\newcommand{\bw}{\mathbf{w}}
\newcommand{\bg}{\mathbf{g}}

\newcommand{\bu}{\mathbf{u}}
\newcommand{\bv}{\mathbf{v}}
\newcommand{\bz}{\mathbf{z}}

\newcommand{\by}{\mathbf{y}}

\newcommand{\bs}{\mathbf{s}}
\newcommand{\br}{\mathbf{r}}

\newcommand{\bp}{\mathbf{p}}

\newcommand{\bxi}{\boldsymbol{\xi}}

\newcommand{\Lcal}{\mathcal{L}}
\newcommand{\Ocal}{\mathcal{O}}
\newcommand{\Acal}{\mathcal{A}}

\newcommand{\Ncal}{\mathcal{N}}

\newcommand{\Wcal}{\mathcal{W}}

\newcommand{\norm}[1]{\|#1\|}

\newcommand{\secref}[1]{Section~\ref{#1}}

\renewcommand{\eqref}[1]{Eq.~(\ref{#1})}
\newcommand{\lemref}[1]{Lemma~\ref{#1}}

\newcommand{\thmref}[1]{Theorem~\ref{#1}}
\newcommand{\propref}[1]{Proposition~\ref{#1}}

\title{
The Oracle Complexity of Simplex-based Matrix Games
}
\author{
Guy Kornowski$^{1}$
\qquad
Ohad Shamir$^{1,2}$
\vspace*{2pt}
\\
$^{1}$Weizmann Institute of Science
\\
$^{2}$University of Toronto
\\
\texttt{\{guy.kornowski,ohad.shamir\}@weizmann.ac.il}
}
\date{}

\begin{document}

\maketitle

\begin{abstract}
We study the problem of solving matrix games of the form $\min_{\mathbf{p}\in\Delta}\max_{\mathbf{w}\in\mathcal{W}}\mathbf{p}^{\top}A\mathbf{w}$, where $A$ is a matrix and $\Delta$ is the probability simplex. This problem encapsulates canonical tasks such as finding a linear separator and computing Nash equilibria in zero-sum games. However, perhaps surprisingly, its inherent complexity (as formalized in the standard framework of oracle complexity) is not well understood. In this work, we first identify different oracle models that are implicitly used by prior algorithms, corresponding to multiplying the matrix $A$ by a vector from either one or both sides. We then prove complexity lower bounds for algorithms under both access models, which in particular imply a separation between them.
As our main result, we prove that in the general 
$\ell_p$/simplex setting where $\mathcal{W}$ is an $\ell_p$ ball for $p\in[1,\infty]$, any algorithm  that utilizes two-sided matrix-vector multiplications requires $\tilde{\Omega}(\epsilon^{-2/3})$ iterations to return an $\epsilon$-suboptimal solution. For any $p\in[1,\infty]$, this is either the first lower bound for such problems, or an exponential improvement over the previously best-known results.
Moreover, for the canonical tasks of finding a linear separator and computing a Nash equilibrium, our lower bounds match (up to log factors) recent algorithms of Karmarkar, O’Carroll and Sidford (2026), thereby resolving their oracle complexities in a natural setting.
\end{abstract}

\section{Introduction}

Given a matrix $A\in\reals^{n\times d}$ and some convex domain $\Wcal\subset\reals^d$,
we consider the
bilinear minimax optimization problem (also known as a matrix game)
\begin{equation} \label{eq: problem}
\min_{\bp\in\Delta^{n-1}}\max_{\bw\in\Wcal}\bp^\top A\bw 
~=~
\max_{\bw\in\Wcal}\min_{\bp\in\Delta^{n-1}}\bp^\top A\bw ~=~\max_{\bw\in\Wcal}~\min_{l\in \{1,\ldots,n\}} (A\bw)_{l}~,
\tag{A}
\end{equation}
where $\Delta^{n-1}:=\{\bp\in\reals^n: \min_i p_i\geq 0, \sum_{i=1}^{n}p_i=1\}$ is the probability simplex. We denote the optimal value of this problem as $\gamma_A$. 
The problem of finding $\bw\in\Wcal$ that approximates the optimum of such problems
is extensively studied throughout machine learning, statistics, optimization and economics, as several important problems take this form, depending on the choice of the set $\Wcal$.
For the sake of introduction, we discuss two prominent cases:

\begin{enumerate}
\item \textbf{Linear separability:} When $\Wcal=\{\bw\in\reals^d:\|\bw\|_2\leq 1\}$ is the unit Euclidean ball, Problem (\ref{eq: problem}) corresponds to the canonical problem of finding a linear separator (namely, a vector $\bw$ such that $A_l\bw>0$ for all rows $A_l$ of $A$), and the optimal value $\gamma_A$ is known as the \emph{margin}.
This is a fundamental classification and linear programming
problem which can be dated back to the work of \citet{mcculloch1943logical},
and is solved for instance
by the well-known Perceptron algorithm
\citep{rosenblatt1958perceptron}.

\item \textbf{Nash equilibria:}
When $\Wcal=\Delta^{d-1}\subset\reals^d$ is the simplex,
Problem (\ref{eq: problem}) corresponds 
to
maximizing the utility of a player in a zero-sum game,
and $\gamma_A$ is known as the game's \emph{value}.
Due to the minimax theorem, by symmetrically solving this problem for each of the players (namely, $\bp$ and $\bw$), this objective is equivalent
to the canonical problem of
finding a Nash equilibrium, or saddle point, in zero-sum matrix games
\citep{nash1950equilibrium}.

\end{enumerate}

Perhaps surprisingly, although these are canonical problems with a long history (see \secref{sec: oracle models}), the inherent complexity of solving them is relatively little studied. The goal of this paper is to study this question through the standard framework of oracle complexity \citep{nemirovskiyudin1983}. Specifically, we are interested in the performance limits of iterative algorithms, where each iteration is based on a simple computation involving the matrix $A$, such as multiplying it by some vector or extracting a row of $A$. This interaction between the algorithm and the matrix can be modeled as accessing an oracle, which simulates this computation and provides the algorithm with the result. We can then ask and rigorously study how many such oracle queries/computations are required, so that an algorithm with no prior knowledge of $A$ will solve the associated matrix problem up to a given level of precision (see \secref{sec:preliminaries} for more details). In this paper, we focus on the high-dimensional setting, where the size of the matrix $A$ is essentially unrestricted, and we are interested in bounds which are independent of (or only weakly dependent on) the matrix size. 

Considering linear separability, it is well known that for matrices with normalized rows, the Perceptron algorithm finds a linear separator using $O(\gamma_A^{-2})$ iterations (each involving a single matrix-vector multiplication and extraction of a single row of $A$), independently of $n,d$ \citep{novikoff1962convergence}.
Moreover, the same iteration complexity
is achievable in non-Euclidean settings whenever $\Wcal$ is an $\ell_p$ ball with $p\in[1,2]$ (with each iteration involving the same access to $A$) \citep{gentile1999robustness,gentile2001new}.
Half a century later, \citet{soheili2012smooth} and \citet{yu2014saddle}
used acceleration techniques due to \citet{nesterov2005excessive} and \citet{nemirovski2004prox} respectively, and
provided accelerated algorithms which find a separator using only $O(\sqrt{\log n}\cdot\gamma_A^{-1})$ iterations. However, a closer inspection of these methods reveal that they rely on a stronger oracle access to the matrix $A$, amounting to 
multiplying $A$ by vectors \emph{on both sides}
(instead of just one-sided multiplications as required by the Perceptron algorithm and its non-Euclidean variants).
Thus, one may ask whether such two-sided operations are necessary for these accelerated results, and how close they are to being optimal.

As for approximating Nash equilibria,
there are several algorithms in the literature which are known to return an $\epsilon$-suboptimal solution\footnote{More precisely, these algorithms guarantee returning an $\epsilon$-approximate saddle point, namely $(\hat{\bw},\hat{\bp})\in\Delta^{d-1}\times\Delta^{n-1}$ such that
$\max_{\bw\in\Delta^{d-1}}\hat{\bp}^\top A\bw-\min_{\bp\in\Delta^{n-1}}\bp^\top A \hat{\bw}\leq\epsilon$. This implies $\epsilon$-suboptimality, since $\max\{
\max_{\bw\in\Delta^{d-1}}\hat{\bp}^\top A\bw-
\gamma_A,~
\gamma_A-\min_{\bp\in\Delta^{n-1}}\bp^\top A\hat{\bw}
\}\leq\epsilon$ where $\gamma_A$ is the optimal value.
Since the latter inequality also implies the former (with $2\epsilon$ replacing $\epsilon$), we see that the two notions are the same up to a factor of 2.
}
using $O(\sqrt{\log(n)\log (d)}\cdot\epsilon^{-1})$ two-sided matrix multiplication queries \citep{nemirovski2004prox,nesterov2005smooth,nesterov2007dual,rakhlin2013optimization}.
Corresponding lower bounds are scarce, and only recently \citet{hadiji2024towards} proved a lower bound of $\Omega(\log(1/n\epsilon))$ for sufficiently small $\epsilon=\mathrm{poly}(1/n)$ and $n=d$.

In a very recent line of work, Karmarkar, O'Carroll and Sidford \citep{karmarkar2025solving,karmarkar2026solving}
improved the best-known complexities for these two canonical tasks, proving that in fact $\tilde{O}(\epsilon^{-2/3})$ two-sided matrix multiplication queries
suffice for returning a linear separator as well as an $\epsilon$-approximate Nash equilibrium, which constitutes a significant polynomial improvement in both settings.
Nonetheless, previous to this work, the optimal possible rates for solving these canonical tasks remained unknown.

\paragraph{Our contributions.}

In this work, we study the oracle complexity of solving matrix games involving the simplex
as in Problem (\ref{eq: problem}). As mentioned earlier, we focus on the high-dimensional regime, where the bounds should be independent (or at least not polynomial) in the matrix size $n,d$, and the matrix $A$ satisfies suitable magnitude constraints.
Our contributions can be summarized as follows.

\begin{itemize}
\item \textbf{One-sided vs. two-sided oracles (\secref{sec: oracle models}:)}
We start by identifying and formalizing different oracle models
for matrix games that are implicitly used by existing algorithms, as we will see that these 
oracles lead to different complexities.
One oracle model corresponds to
querying rows of $A$, together with ``one-sided'' multiplication queries of the form $\bw\mapsto A\bw$. The second (and stronger) oracle model we will consider allows ``two-sided'' multiplications $(\bp,\bw)\mapsto(\bp^\top A,A\bw)$.

    \item \textbf{
    Linear separability with a one-sided oracle (Theorem~\ref{thm:one-sided}):}
    We first show that any deterministic algorithm that performs row queries and one-sided multiplication queries, must require $\Omega(\gamma_A^{-2})$ queries in the worst-case in order to find a linear separator. More generally, we show this lower bounds holds (up to log factors) whenever $\Wcal$ is an $\ell_p$ ball with $p\in[1,2]$.
    The claim is proved using a classic lower bound technique due to \citet{nemirovskiyudin1983}.
    In particular, this establishes the optimality of the Perceptron algorithm under this oracle model, as well as its non-Euclidean variants.

\item \textbf{General simplex/$\ell_p$ games with a two-sided oracle (Theorem~\ref{thm:main}):} 
We prove that whenever the domain $\Wcal$ is an $\ell_p$ ball for $p\in[1,\infty]$,
any deterministic algorithm which performs two-sided multiplication queries must require $\tilde{\Omega}(\epsilon^{-2/3})$ queries in the worst-case in order to find an $\epsilon$-suboptimal solution for Problem~(\ref{eq: problem}). No such lower bound was knowbefore this work for \emph{any} value of $p$, and to the best of our knowledge, this is the first polynomial oracle complexity lower bound for either linear separability or solving zero-sum games, which applies even to accelerated algorithms. As opposed to the lower bound for the one-sided oracle, the proof of the lower bound for the two-sided oracle is substantially more involved, and requires some new proof ideas. 
In particular, our result carries several notable implications:

\begin{itemize}
    \item For $p=2$, our result implies that $\tilde{\Omega}(\gamma_A^{-2/3})$ matrix-vector queries are required to find a linear separator for data with margin $\gamma_A$.
    Together with \citet{karmarkar2026solving}, our work resolves the oracle complexity of this task (up to log factors) in the natural model of matrix-vector multiplications.

    \item For $p=1$, via a reduction from the simplex to the $\ell_1$ ball, our result implies that $\tilde{\Omega}(\epsilon^{-2/3})$ matrix-vector queries are required to compute an $\epsilon$-suboptimal strategy in zero-sum games, and hence an $\epsilon$-Nash equilibrium. This is an exponential improvement over the previously known lower bound by \citet{hadiji2024towards}.  Together with \citet{karmarkar2026solving}, our work resolves the oracle complexity of this task (up to log factors) in the natural model of matrix-vector multiplications.
\end{itemize}

\end{itemize}

We further discuss implications of our results to other non-Euclidean settings that appear in the literature, and conclude with a discussion in \secref{sec:discussion}.

\section{Preliminaries}\label{sec:preliminaries}

\paragraph{Notation.}
We use capital letters to denote matrices, and bold-face letters to denote vectors. $[n]$ is shorthand for $\{1,\ldots,n\}$.
Given $p\in[1,\infty]$, we denote by $q$ its dual exponent, i.e., $1/p+1/q=1$
and use the convention $1/\infty=0$.
Vectors are always in column form. Given an indexed vector $\bv_t$, $v_{t,i}$ denotes its $i$-th entry. $\be_i$ denotes the $i$-th standard basis vector. $\mathbf{1}$ is the all-ones vector. Given a matrix $A$, $A_l$ refers to its $l$-th row. $\norm{\cdot}$ refers to the Euclidean norm $\norm{\cdot}_2$, $\norm{\cdot}_p$ refers to the $\ell_p$ norm, and $\|A\|_{p\to\infty}:=\max_{\|\bw\|_p\leq 1}\|A\bw\|_\infty=\max_{l\in[n]}\|A_l\|_q$ (the latter holds by H\"older's inequality).
$\log(\cdot)$ refers to the natural logarithm.

\paragraph{Oracle Complexity.}
As described in the introduction, we study the complexity of matrix games via the well-known optimization-theoretic frameork of oracle complexity \citep{nemirovskiyudin1983}. This framework focuses on iterative methods, where each iteration utilizes some restricted information about the
relevant
objective function. In our context of matrix games as in (\ref{eq: problem}), we assume that the domain $\Wcal$ is fixed and known, and that the matrix $A$ is known to belong to some set $\mathscr{A}$, which captures a magnitude constraint for the optimization problem.
Specifically, whenever $\Wcal$ is an $\ell_p$ ball,
we require $\|A\|_{p\to\infty}\leq 1$, which is equivalent to ensuring that $|A_l \bw|\leq 1$ for any row $l\in[n]$ and $\bw$ in the unit $\ell_p$ ball.
For linear separability for example, this correspond to all $n\times d$ matrices whose rows have Euclidean norm at most $1$; similarly, for Nash equilibrium, it is the set of all $n\times d$ matrices whose entries have values in $[-1,+1]$.
Crucially, the algorithm has no additional prior knowledge of $A$. In order to solve the matrix problem, the algorithm has access to an oracle $\Ocal(\cdot)$ which provides some limited information about $A$: For example, given the vectors $\bp\in\reals^n, \bw\in \reals^d$ chosen by the algorithm, the oracle returns $\bp^\top A$ and $A\bw$. The algorithm interacts with the oracle for a given number of iterations, after which it returns an output as a function of all previous oracle responses. One can then ask how many iterations are required by some algorithm, for the output to satisfy a certain performance metric for all $A\in\mathscr{A}$ (as a function of the problem parameters $n,d,\mathscr{A},\Wcal$ and the type of oracle $\Ocal(\cdot)$). This framework naturally models standard scalable approaches for solving matrix games, and allows one to prove both upper bounds and unconditional lower bounds for iterative algorithms, assuming they interact with the matrix $A$ in a manner corresponding to the given oracle. 

\begin{remark}
In this paper, we focus on deterministic algorithms, whose oracle queries and output are deterministic function of the previous oracle responses. Since all state-of-the-art algorithms for the problems we consider are deterministic, this is without too much loss of generality. However, extending our lower bounds to randomized algorithms is certainly an interesting direction for future work (see \secref{sec:discussion} for more details). 
\end{remark}

\section{Oracle models} \label{sec: oracle models}

To motivate the oracles that we consider, let us begin by examining the classical Perceptron algorithm \citep{rosenblatt1958perceptron} for linear separability: Given the matrix $A$ (and assuming that a linear separator exists), the algorithm iteratively searches for a row $A_l$ of $A$ such that $A_l\bw<0$ (where $\bw$ is the current iterate), and then adds $A_l$ to $\bw$. This process is repeated until no such rows are found. It is well-known that this algorithm will terminate in at most $O(\gamma_A^{-2})$ iterations, resulting in a linear separator $\bw$ such that $\min_{l\in [n]}(A\bw)_l > 0$. From an oracle complexity perspective, each iteration of the algorithm can be modeled via two operations on $A$: One is a right matrix-vector multiplication $\bw\mapsto A\bw$, and the second is the extraction of a row of $A$ whose inner product with $\bw$ is negative. We can formally model these operations via the following oracle: 
\begin{definition}[One-sided Oracle $\Ocal_1^{A}$]
    Given some $\bw\in\reals^d$ and index $l\in [n]$, the oracle $\Ocal_1^{A}(l,\bw)$ returns $A\bw$ and $A_{l}$. 
\end{definition}
Thus, the Perceptron's convergence guarantee implies that $O(\gamma_A^{-2})$ queries to a one-sided oracle is sufficient for finding a linear separator.
In fact, a more general result can be achieved by applying the well-known subgradient method on  the equivalent convex problem $\min_{\bw:\norm{\bw}\leq 1}\max_{\bp\in\Delta}\bp^\top (-A)\bw$. Specifically, by standard results, this method requires $O(1/\epsilon^2)$ subgradient computations in order to find a vector $\bw$ whose value is $\epsilon$-suboptimal \citep{nemirovskiyudin1983,nesterov2018lectures}. This holds for any $\epsilon$, whereas the guarantee for the Perceptron is merely for the special case $\epsilon=\gamma_A$ (namely, we seek a solution whose value is $>0$, with the optimal value being $\gamma_A$). From an oracle-complexity perspective, implementing such methods requires access to supergradients of the function $f(\bw)=\min_{\bp\in\Delta^{n-1}}\bp^\top A\bw = \min_{l\in[n]} (A\bw)_l$, which equal $A_{l_{\min}}$ where $l_{\min}\in\arg\min_{l\in [n]}(A\bw)_{l}$. Thus, we can model these methods as iteratively interacting with the following supergradient oracle:
\begin{definition}[Supergradient Oracle $\Ocal_\partial^{A}$]
    Given some $\bw\in\reals^d$, the oracle $\Ocal_\partial^{A}(\bw)$ returns $A_{l_{\min}}$ where $l_{\min}\in\arg\min_{l\in [n]}(A\bw)_{l}$.
\end{definition}
Note that this oracle is strictly weaker than a one-sided oracle (up to a factor of $2$): On the one hand, we can simulate each call to a supergradient oracle by two calls to a one-sided oracle. On the other hand, a one-sided oracle allows us to extract any single row of $A$, and not just the one corresponding to the smallest entry in $A\bw$. In what follows, we will prove lower bounds for any algorithm based on a one-sided oracle, and thus the lower bounds automatically extend to any algorithm based on a supergradient oracle. 

As we discussed earlier, works such as \cite{soheili2012smooth} and \cite{yu2014saddle} show that the $O(\gamma_A^{-2})$ iteration bound of the Perceptron algorithm can actually be improved. In a nutshell, this is achieved by applying accelerated gradient methods on top of a smoothing of the objective function (using, for example, the log-sum-exp function instead of a hard max). These result in bounds of the form $O(\sqrt{\log(n)}/\gamma_A)$ for matrices with margin parameter $\gamma_A$, or more generally, $O(\sqrt{\log(n)}/\epsilon)$ iterations to get an $\epsilon$-optimal solution. Accelerated methods to maximize the margin were also proposed in \citep{ji2021fast,wang2023accelerated}.

Why can these accelerated methods beat the Perceptron bound, from an oracle-complexity perspective? A close inspection of these methods reveal that they all actually require a stronger oracle than a supergradient (or even one-sided) oracle: They crucially require \emph{two-sided} matrix-vector multiplications. In more detail, accelerated gradient methods can optimize convex functions with $L$-Lipschitz gradients to suboptimality $\epsilon$ with $O(\sqrt{L/\epsilon})$ gradient computations. Moreover, for any $\epsilon$, the $\min$ (or $\max$) operator can be approximated to accuracy $\epsilon$ using a smooth function $\tilde{f}$ with $(\log(n)/\epsilon)$-Lipschitz gradients. Combining these two observations, it follows that one can optimize the original matrix problem to accuracy $2\epsilon$, using $O(\sqrt{(\log(n)/\epsilon)/\epsilon})=O(\sqrt{\log(n)}/\epsilon)$ gradient computations of the function $\bw\mapsto \tilde{f}(A\bw)$. The gradient of this function is given by $A^\top\tilde{f}'(A\bw)$, so its computation requires multiplying the matrix $A$ from both the left and from the right. Thus, we are led to the following natural \emph{two-sided} oracle model: 
\begin{definition}[Two-sided Oracle $\Ocal_2^{A}$]
    Given some $\bp\in\reals^n,\bw\in\reals^d$, the oracle $\Ocal_2^A(\bp,\bw)$ returns $A\bw$ and $\bp^\top A$.
\end{definition}
Clearly, a two-sided oracle is stronger than
a
one-sided
oracle,
since $\Ocal_1^{A}(l,\bw)$ can be simulated by $\Ocal_2^{A}(\be_l,\bw)$. As far as we know, all existing accelerated algorithms for linear separability can be implemented with such an oracle, so any lower bound for algorithms based on this oracle will apply to them.
We also note that lower bounds with respect to two-sided oracles were studied in the context of solving linear equations \citep{nemirovsky1992information}, which does not directly relate to our results or proofs.

Although we have considered so far the linear separability problem, a two-sided oracle is also very natural to model algorithms for other matrix games. In particular, for computing a Nash equilibrium, a two-sided oracle corresponds precisely to a first-order (or gradient) oracle, which given $\bw,\bp$, returns the gradient of the function $(\bw,\bp)\mapsto \bp^\top A\bw$ (namely $\bp^\top A$ and $A\bw$). Under this well-studied setting, the best-known method dating back to \citet{nemirovski2004prox} requires $O(\sqrt{\log(n)\log(d)}/\epsilon)$ two-sided oracle calls in order to find an $\epsilon$-optimal solution (see also \citet{nesterov2005smooth,nesterov2007dual,rakhlin2013optimization} and \citet{hadiji2024towards} for a detailed discussion of other results). However, very little work appears to exist on lower bounds, with the notable exception of \cite{hadiji2024towards}, whose bound is $\Omega(\log(1/(n\epsilon)))$ -- namely, logarithmic in $1/\epsilon$ --  and when $n=d$. 

\begin{remark}
In all oracle definitions, we do not restrict the input $\bw$ to lie in $\Wcal$, nor $\bp$ to lie in $\Delta^{n-1}$. Thus, our lower bounds will apply equally to algorithms which can query outside these domains.
\end{remark}

\subsection*{Additional Related work} 

\paragraph{Dimension-dependent oracle complexity.}
Besides the accelerated algorithms for linear separability discussed earlier (with convergence rate $O(\log(n)\cdot\epsilon^{-1})$), there exists another family of rescaling-based methods, which can achieve an even faster (logarithmic) dependence on $\epsilon^{-1}$, but with iteration bounds scaling polynomially with the matrix dimensions \citep{dunagan2004polynomial,belloni2009efficient,pena2016deterministic,dadush2020rescaling}. This is akin to the situation in convex optimization, where one can trade off between algorithms with dimension-independent, $\text{poly}(\epsilon^{-1})$-dependent guarantees (using gradient or subgradient methods), and algorithms with poly-dimension-dependent, $\log(\epsilon^{-1})$-dependent guarantees (using methods such as interior points, ellipsoids, or center-of-gravity, see \citealp{nesterov2018lectures}).
Similarly, several works developed algorithms for general matrix games that reduce the $\epsilon$-dependence at the cost
of polynomial dependencies on $n,d$, thus improving size-independent algorithms in certain parameter regimes \citep{carmon2019variance,carmon2024whole}.
Since our focus is on size-independent (or near independent) bounds, these family of methods are orthogonal to our work, although understanding the ultimate limits in those regimes is interesting as well.

\paragraph{Other oracle models.}
Going beyond matrix games, there exist quite a few oracle complexity lower bounds for more general minmax convex-concave optimization problems (e.g., \citealp{ibrahim2020linear,ouyang2021lower}), but the constructions do not apply to matrix games as we consider. \citet{carmon2020acceleration} considered the oracle complexity of optimizing the maximum of several linear functions using a ball oracle, which returns the optimum in a small ball around a given point. Although the structure of the objective function is closely related to ours, the oracle is different than the one we consider here (as far as we can surmise). Moreover, their lower bound construction requires non-homogeneous linear functions, which are not included in the matrix game settings that we consider. \citet{clarkson2012sublinear} provide an $\Omega(\gamma_A^{-2})$ oracle complexity lower bound for linear separability, but for a weaker oracle which only returns individual matrix entries.

\section{Warm up: Linear separability with a one-sided oracle}

In this section, we assume that $\Wcal$ is the unit $\ell_p$ ball in $\reals^d$ for some $p\in[1,2]$, so the problem of interest is 
\begin{equation}\label{eq:l2}
\max_{\bw\in \reals^d:\norm{\bw}_p\leq 1}~\min_{\bp\in\Delta^{n-1}}\bp^\top A\bw
~=~\max_{\bw\in \reals^d:\norm{\bw}_p\leq 1}~\min_{l\in [n]} (A\bw)_{l}~.
\end{equation}
As previously discussed, the special case $p=2$ corresponds to the canonical problem of finding a max-margin linear separator (in the Euclidean norm) for a dataset comprised of $A$'s rows. More generally, non-Euclidean generalization of the Perceptron algorithm that apply for any $p\in[1,2]$ have all been shown to require $O(\gamma_A^{-2})$ one-sided oracle queries for any $p\in[1,2]$ \citep{gentile1999robustness,gentile2001new}.

We begin by formally showing that any algorithm using a one-sided oracle for $T$ iterations cannot find a linear separator (a vector $\bw$ such that $A\bw$ has only positive entries), if the margin parameter is less than $\tilde{\Omega}(1/\sqrt{T})$. Since a supergradient oracle is weaker than a one-sided oracle, the same result automatically applies to any algorithm based on a supergradient oracle.

\begin{theorem}\label{thm:one-sided}
   Suppose $d>4T+32\log(4T)$. Then for any deterministic algorithm for solving Problem (\ref{eq:l2}) for some $p\in[1,2]$, there exists a $(T+1)\times d$ matrix $A$ satisfying $\|A\|_{p\to\infty}=\max_{l\in [T+1]}\norm{A_{l}}_q\leq 1$ and
    \[
   \max_{\bw:\norm{\bw}_p\leq 1}\min_{\bp\in \Delta^{n-1}}\bp^\top A\bw~\geq~
   \frac{1}{\sqrt{T+1}}
 \cdot\begin{cases}
     1 & \text{if~~}p=2
     \\
     \frac{1}{64\log^{3/2}(4d)}& \text{if~~}p\in[1,2)
 \end{cases}~,
    \]
    yet after $T$ rounds of interaction with a one-sided oracle $\Ocal_1^A$, the algorithm returns a vector $\bw_{T+1}$ such that
    \[
    \min_{\bp\in \Delta^{n-1}}\bp^\top A \bw_{T+1}~\leq~0~.
    \]
\end{theorem}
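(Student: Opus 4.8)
<br>

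The plan is to use an adversarial oracle argument: construct the matrix $A$ adaptively, responding to the algorithm's queries in a way that keeps many ``coordinates of freedom'' untouched, so that after $T$ rounds the algorithm cannot have pinned down a good separator. The key idea is that the true separating direction will live in a subspace that the algorithm has never probed.

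\textbf{Setup.} Since the algorithm is deterministic, its queries $(\ell_t,\bw_t)$ for $t=1,\dots,T$ and its final output $\bw_{T+1}$ are determined by the sequence of oracle responses. I would maintain, round by round, an orthonormal set of ``unseen'' directions in $\reals^d$. Concretely, at round $t$ the oracle has been asked for $A\bw_t$; I would choose a single fresh orthonormal direction $\bv_t$ (orthogonal to $\bw_1,\dots,\bw_t$, all previously used $\bv$'s, and a reserved direction $\bv_\star$) and declare that the $t$-th row $A_t := \frac{1}{\sqrt{T+1}}(\bv_\star - \sum_{s\le ?}\dots)$ — more precisely, I would set up the rows so that $A_t\bw_s = 0$ for all $s \le t$ (so the response $A\bw_t$ looks like the zero vector, or at least nonpositive in the queried coordinate), while secretly each row has a component $\tfrac{1}{\sqrt{T+1}}$ along the reserved direction $\bv_\star$. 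Because $d > 2T+1$, there is always enough room to pick all these mutually orthogonal directions (we need roughly $T$ query directions plus $T$ fresh $\bv_t$'s plus one $\bv_\star$, i.e.\ about $2T+1$ of them).

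\textbf{Completing $A$ and checking the three conclusions.} After $T$ rounds, define $A$ by the rows $A_1,\dots,A_{T+1}$ constructed above (the last row $A_{T+1}$ handled similarly, using that the algorithm's output $\bw_{T+1}$ is one more ``query'' we can react to). The choice is made so that: (i) $\norm{A_\ell} = 1$ for each row — arrange the orthogonal components so their squared norms sum to $1$, with the $\bv_\star$-component being $\tfrac{1}{\sqrt{T+1}}$ and the remaining mass spread over fresh orthogonal directions; (ii) $\bw = \bv_\star$ is feasible ($\norm{\bv_\star}=1$) and gives $(A\bv_\star)_\ell = \tfrac{1}{\sqrt{T+1}}$ for every row, so $\min_{\bp}\bp^\top A\bv_\star = \tfrac{1}{\sqrt{T+1}}$, establishing the margin lower bound; (iii) by construction every query $\bw_t$ (and the output $\bw_{T+1}$) is orthogonal to $\bv_\star$ and to all the row-specific fresh directions relevant to it, so $A\bw_{T+1}$ has a nonpositive entry — i.e.\ $\min_\bp \bp^\top A\bw_{T+1} \le 0$ — because the algorithm has no information forcing a positive inner product with the hidden row components. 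The consistency check is that each declared oracle response $(A\bw_t, A_{\ell_t})$ is compatible with the final $A$; the row-extraction responses $A_{\ell_t}$ must be revealed honestly, but since each row's ``secret'' structure is only its $\bv_\star$-component (shared across all rows) plus orthogonal junk, revealing a row does not help the algorithm separate — I need to make sure revealing $A_{\ell_t}$ doesn't accidentally reveal $\bv_\star$ in a usable way, which is why the $\bv_\star$-component is the \emph{same} small vector for all rows and the rest is orthogonal noise the algorithm can't exploit within $T$ steps.

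\textbf{Main obstacle.} The delicate point is the row-extraction queries: unlike pure matrix-vector oracles, here the algorithm literally sees entire rows $A_{\ell_t}$, so the ``hidden subspace'' argument must be robust to that. The fix is that all rows share the \emph{identical} useful component $\tfrac{1}{\sqrt{T+1}}\bv_\star$; seeing one row tells the algorithm this shared vector, but then $\bv_\star$ is revealed — so I instead must ensure the rows' useful directions are \emph{distinct and mutually orthogonal}, say $A_\ell$ has component $c\,\bu_\ell$ with $\{\bu_\ell\}$ orthonormal and the good separator being $\bw = \tfrac{1}{\sqrt{T+1}}\sum_\ell \bu_\ell$. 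Then extracting row $A_{\ell_t}$ reveals only $\bu_{\ell_t}$, one of $T+1$ needed directions, so after $T$ queries at least one $\bu_\ell$ is completely unknown and the algorithm's output has a nonpositive entry in that row. Reconciling the row-query responses, the matrix-vector responses, and the norm/margin bookkeeping simultaneously — while keeping everything within the $d > 2T+1$ dimension budget — is the crux of the argument; the rest is linear algebra.
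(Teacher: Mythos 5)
Your final, self-corrected plan (distinct mutually orthogonal directions $\bu_\ell$ per row, revealed one at a time, pigeonhole on $T+1$ rows versus $T$ queries, separator $\bw=\tfrac{1}{\sqrt{T+1}}\sum_\ell \bu_\ell$, dimension budget $d>2T+1$) is exactly the paper's proof; the initial shared-$\bv_\star$ idea you rightly discard, and the "bookkeeping" you defer is handled in the paper simply by taking each row to be \emph{exactly} a unit vector $\bu_\ell$ orthogonal to all queries made up to the time it is revealed (no extra "junk" component, so $c=1$, unit rows, and margin exactly $1/\sqrt{T+1}$ all close at once). With that instantiation the consistency of the adaptive responses with the final matrix is immediate, since rows fixed later are chosen orthogonal to all earlier queries as well.
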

We note that in the theorem, we require $\|A\|_{p\to\infty}\leq 1$, to ensure that $|A_l \bw|\leq 1$ for any row $l\in[n]$ and $\bw$ in the unit $\ell_p$ ball, which is the relevant normalization of $A$ in this problem (of course, one can also re-scale $\|A\|_{p\to\infty}$ to any other fixed value without loss of generality).

By fixing some $\gamma_A>0$ and setting
$T=\lfloor2^{-12}\gamma_A^{-2}\log^{-3}(4d)\rfloor,~d=\lceil4T+32\log(4T) \rceil$, we can restate this result as follows: For any $\gamma_A$, and for any algorithm whose interaction with $A$ is captured by a one-sided oracle, there exists a matrix $A$ (with unit rows in $q$-norm and margin parameter at least $\gamma_A$) such that the required number of iterations to find a linear separator is $\tilde{\Omega}(\gamma_A^{-2})$,
and even $\Omega(\gamma_A^{-2})$ when $p=2$, as claimed in the introduction.
This reassures the optimality of the Perceptron algorithm in this model of computation, and more generally, of its non-Euclidean generalizations (up to log factors).

We emphasize that the proof of this particular theorem is a rather straightforward adaptation of existing techniques, and its main purpose is to complete the picture regarding the power of different oracles to solve this matrix game. Nevertheless, the proof below illustrates the iterative nature of such constructions, concretely via iterative Gaussian projections, an idea that also appears in the proofs of our other results to follow.

\begin{remark}
It should be possible to remove the log factor $1/\log^{3/2}(d)$ that appears in the stated result for $p\in[1,2)$, at the cost of a more involved analysis.
The main purpose of our presentation here is to set the stage for the two-sided results to follow,
and to prove a \emph{polynomial} separation of the rates of one- vs. two-sided oracles.
\end{remark}

\begin{proof}[Proof of \thmref{thm:one-sided}]
    The proof is directly inspired by standard oracle complexity lower bounds for convex Lipschitz optimization due to \citet{nemirovskiyudin1983}, as well as the standard mistake lower bound proof of the Perceptron algorithm (cf. \citealp[Section 9.5, Question 3]{shalev2014understanding}). The main idea is to construct $A$'s row as mutually orthogonal unit vectors, which are also orthogonal to the algorithm's queries $\bw_t$. Therefore, the algorithm can recover the rows one at a time, by querying for a particular row. However, if the number of rows is larger than $T$, there will remain rows orthogonal to $\bw_{T+1}$. Therefore, $\bw_{T+1}$ will not be a linear separator for $A$. On the other hand, since $A$ has $T+1$ orthogonal rows, it can be shown to be linearly separable with margin roughly $1/\sqrt{T}$.

More formally, given an algorithm $\Acal$, consider the following iterative construction:
    \begin{itemize}
    \item Initialize $\Lcal_0=\emptyset$, and $A_0$ to be the the all-zeros $(T+1)\times d$ matrix.
    \item For $t=1,2,\ldots,T:$
    \begin{itemize}
        \item Compute algorithm queries $\bw_t,l_t$ (based on oracle outputs received so far).
        \item Set $A_{t}:=A_{t-1}$ and $\Lcal_t:=\Lcal_{t-1}$. 
        \item If $l_t\notin \Lcal_{t-1}$
        \begin{itemize}
            \item Let $(A_{t})_{l_t}=\frac{1}{\|\Pi_t(\bxi_t)\|_q}\Pi_t(\bxi_t)$ be the normalized projection of an independent Gaussian $\bxi_t\sim\Ncal(\mathbf{0},I_d)$, where $\Pi_t:\reals^d\to\mathrm{span}((\bw_i)_{i=1}^{t},(A_t)_{\Lcal_{t}})^\perp$ is the projection onto the subspace orthogonal to $\bw_1,\ldots,\bw_{t}$ as well the rows of $A_t$ indexed by $\Lcal_t$.
            \item Set $\Lcal_t := \Lcal_{t-1}\cup \{l_t\}$.
        \end{itemize}
        \item Feed $\Acal$ with $A_t\bw_t$ and $(A_{t})_{l_t}$ (as a response for its queries $\bw_t,l_t$). 
    \end{itemize}
    \item Compute algorithm output $\bw_{T+1}$ (based on oracle outputs so far).
    \item Set $A=A_{T}$, and set all rows $l\notin \Lcal_{T}$ of $A$ to be some mutually orthogonal vectors which are also orthogonal to $\bw_1,\ldots,\bw_{T+1}$ with unit $q$-norm.
    \end{itemize}
    We note that the dimensionality $d$ is sufficiently high to ensure that the rows of $A$ are mutually orthogonal unit vectors as specified.
    Moreover, it is easy to verify that because of the orthogonality, it holds that $A_t\bw_t=A\bw_t$ for all $t\in [T]$: Namely, the responses given to the algorithm are consistent with the oracle responses on the matrix $A$. However, after $T$ iterations, $|\Lcal_{T}|\leq T$, yet there are $T+1$ rows. Therefore, at least one row of $A$ will be chosen to be orthogonal to $\bw_{1},\ldots,\bw_{T+1}$, and in particular to $\bw_{T+1}$. Hence, $A\bw_{T+1}$ contains a $0$ entry, so
    \[
    \min_{\bp\in\Delta^{n-1}}\bp^\top A\bw_{T+1}~\leq~0~.
    \]

    On the other hand, the vector $\bw:=\frac{1}{\|\sum_{t=1}^{T+1}A_{t}\|_p}\sum_{t=1}^{T+1}A_{t}$ has unit $\ell_p$ norm,
    and satisfies by the orthogonality property for any $l\in[T+1]:$
    \begin{align} \label{eq: Awl}
(A\bw)_l
= \frac{1}{\|\sum_{t=1}^{T+1}A_{t}\|_p}\sum_{t=1}^{T}A_t^\top A_l
= \frac{\|A_l\|_2^2}{\|\sum_{t=1}^{T+1}A_{t}\|_p}~.
    \end{align}
Recalling that $A$'s rows are projected Gaussians, and therefore are Gaussians themselves,\footnote{To be precise, they are Gaussians with random covariance matrices.} we can invoke Gaussian concentration arguments to lower bound the ratio above. This results in the following lemma, whose proof is deferred to Section~\ref{sec: proofs}.

\begin{lemma} \label{lem: one sided}
$A$ constructed as above satisfies with probability at least $\half$ that for all $l\in[T+1]:$
\[
\frac{\|A_l\|_2^2}{\|\sum_{t=1}^{T+1}A_{t}\|_p}
 \geq
\frac{1}{\sqrt{T+1}}
 \cdot\begin{cases}
     1 & \text{if~~}p=2
     \\
     \frac{1}{64\log^{3/2}(4d)}& \text{if~~}p\in[1,2)
 \end{cases}
~.
\]
\end{lemma}

Overall, by the probabilistic method, there is a suitable matrix $A$ so that \eqref{eq: Awl} and \lemref{lem: one sided} ensure that
    \[
    \max_{\bw:\norm{\bw}_p\leq 1}\min_{\bp\in\Delta^{n-1}}\bp^\top A\bw
    ~\geq~
     \frac{1}{\sqrt{T+1}}
 \cdot\begin{cases}
     1 & \text{if~~}p=2
     \\
     \frac{1}{64\log^{3/2}(4d)}& \text{if~~}p\in[1,2)
 \end{cases}~.
    \]

\end{proof}

\section{Simplex/$\ell_p$ games with a two-sided oracle}

We now turn to an oracle complexity lower bound for the general class of simplex/$\ell_p$ matrix games for some $p\in[1,\infty]$, namely problems of the form
\begin{equation} \label{eq: problem lp}
\max_{\bw:\|\bw\|_p\leq 1}\min_{\bp\in\Delta^{n-1}}\bp^\top A\bw ~,
\end{equation}
with respect to the much stronger two-sided oracle $\Ocal_2^A$.

\begin{theorem}\label{thm:main}
The following holds for some large enough universal constant $C>0$ and any $p\in[1,\infty]$: For any $T>C$, suppose $n,d$ are sufficiently large so that $d>CT$ and $n>CT^2\log(T)$.
    Then for any deterministic algorithm for solving Problem~(\ref{eq: problem lp}), there exists a matrix $A\in \reals^{n\times d}$ satisfying $\|A\|_{p\to\infty}\leq 1$ and
\begin{align*}
\max_{\bw:\norm{\bw}_p\leq 1}\min_{\bp\in \Delta^{n-1}}\bp^\top A\bw~&\geq~
\frac{1}{CT(\sqrt{T\log(nd)}+\log(nd))}
\cdot 
\begin{cases}
    1 & \text{if~~}p\in[1,2]
    \\
     1/\sqrt{\log(d)} & \text{if~~}p\in(2,\infty]
    \end{cases}
~,
\end{align*}
yet after $T$ rounds of interaction with the two-sided oracle $\Ocal_2^A$, the algorithm returns a vector $\bw_{T+1}$ such that
    \[
    \min_{\bp\in \Delta^{n-1}}\bp^\top A \bw_{T+1}~\leq~0~.
    \]   
\end{theorem}

\begin{remark}
Note that by setting $n,d$ in the result above so that $\sqrt{T\log(nd)}\gg\log(nd)$, the lower bound scales as $\Omega(1/T^{3/2}\sqrt{\log(nd)})$ for $p\in[1,2]$, and as $\Omega(1/T^{3/2}\log(nd))$ for $p\in(2,\infty]$. In any case, a crude bound that holds regardless of the values of $n,d$ and $p$ is $\Omega(1/T^{3/2}\log^{3/2}(nd))$.
\end{remark}

As before, we can state this lower bound in terms of the suboptimalty. By fixing small enough $\epsilon>0$ and choosing $T,n,d$ such that the lower bound in the theorem is at least $\epsilon$, we can restate the bound as follows: If the number of iterations $T$ is less than $\Omega(\log^{-1/3}(nd)\epsilon^{-2/3})$ when $p\in[1,2]$ or less than $\Omega(\log^{-2/3}(nd)\epsilon^{-2/3})$ if $p\in(2,\infty]$, there exists a $n\times d$ matrix $A$ such that the optimal value is at least $\epsilon$, yet the algorithm's output has value less than $0$, hence is suboptimal by at least $\epsilon$.

We now turn to discuss the proof of \thmref{thm:main}. We begin by noting that the proof technique of \thmref{thm:one-sided} is of no use here, as it is based on revealing the rows of $A$ to the algorithm one at a time. This is not possible with a two-sided oracle, which allows the algorithm to compute an arbitrary weighted combination of all rows of $A$ using a single query. A second difficulty is that we consider a very simple class of homogeneous bilinear functions, which means that any lower bound necessarily has to be of this form (as opposed to more general min-max optimization problems, where for lower bounds we can use functions with a richer structure). To handle these difficulties, we introduce a different proof technique, which still forces the algorithm to discover information about $A$ in an incremental manner, but in terms of certain vector outer products rather than rows. Specifically, we will utilize the following randomized construction of $A$:

\begin{lbconstruction}\label{assump:Aconst}
Given an iteration bound $T\geq 1$ such that $2T+1\leq \min\{n,d\}$,  positive parameters $\alpha,\beta$, and an algorithm $\Acal$ interacting with a two-sided oracle for $T$ iterations, the matrix $A$ is defined as
\[
A~=~\sum_{j=1}^{T}(\bv_{j-1}-\bv_j)\bu_j^\top+\bv_T\bu_{T+1}^{\top}~,
\]
where $\forall j$, $\bv_j\in \reals^n$, $\bu_j\in \reals^d$, and these vectors are constructed iteratively as follows:
\begin{itemize}
    \item Let $\bv_0=\alpha \mathbf{1}$ (where $\mathbf{1}$ is the all-ones vector).
    \item For $t=1,2,\ldots,T+1:$
    \begin{itemize}
        \item Compute algorithm $\Acal$'s queries $\bp_t,\bw_t$ (based on oracle outputs received so far).
\item Let $\bv_t=\beta\cdot \Pi_{t}(\bxi_t)$ be the scaled projection of an independent Gaussian $\bxi_t\sim\Ncal(\mathbf{0},I_n)$, where $\Pi_t:\reals^n\to\mathrm{span}(\bv_1,\ldots,\bv_{t-1},\bp_1,\ldots,\bp_t)^\perp$
is the projection onto the subspace orthogonal to
$\bv_1,\ldots,\bv_{t-1},\bp_1,\ldots,\bp_t$.

\item Let $\bu_t=\frac{1}{\|\Pi'_{t}(\bxi'_t)\|_2}\Pi'_{t}(\bxi'_t)$ be the normalized projection of an independent Gaussian $\bxi'_t\sim\Ncal(\mathbf{0},I_d)$ (also independent of $\bxi_t$)
where $\Pi'_t:\reals^d\to\mathrm{span}(\bu_1,\ldots,\bu_{t-1},\bw_1,\ldots,\bw_t)^\perp$
is the projection onto the subspace orthogonal to $\bu_1,\ldots,\bu_{t-1},\bw_1,\ldots,\bw_t$.
\item Let $A_t = \sum_{j=1}^{t}(\bv_{j-1}-\bv_j)\bu_j^\top$, and feed $\Acal$ with $\bp_t^\top A_t$ and $A_t\bw_t$ (as a response to its queries $\bp_t,\bw_t$).
    \end{itemize}
\end{itemize}
\end{lbconstruction}

Note that by construction, $\bv_t$ is
a standard Gaussian random vector in $\reals^n$, scaled by $\beta$ and projected on the subspace orthogonal to $\mathrm{span}(\bv_0,\ldots,\bv_{t-1},\bp_1,\ldots,\bp_t)$, hence, $\bv_t$ is orthogonal to all these vectors. This is a non-trivial projection (i.e., not zero)
due to the assumption $2T+1\leq n$. Similarly, the assumption $2T+1\leq d$ ensures that the construction of $\bu_1,\dots,\bu_{T+1}\in\reals^d$ results in an orthonormal set (with probability $1$). The choice of a Gaussian distribution is crucially used in order to control the sign and magnitudes of various quantities associated with the resulting matrix, as will be seen later in the proof.

For the construction to be valid, we need to ensure that the oracle responses as defined above, using intermediate matrices $A_t$, are all consistent with the same fixed matrix $A$ in hindsight. This is formalized in the following lemma, whose proof easily follows from the orthogonality properties in the construction, as detailed in the proof section.

\begin{lemma}[Responses simulate oracle on $A$]\label{lem:sim}
    For all $t\in [T]:~\bp_t^\top A_t=\bp_t^\top A$ and $A_t\bw_t=A\bw_t$. Therefore, the sequences of vectors $\{\bp_t\}_{t=1}^{T},\{\bw_t\}_{t=1}^{T+1}$ are the same as if the algorithm was fed with the oracle $\Ocal_2^{A}$ for $T$ iterations. Moreover,
    \begin{equation} \label{eq: Aw_{T+1}}
        A\bw_{T+1}~=~\sum_{j=1}^{T}(\bv_{j-1}-\bv_j)\bu_j^\top \bw_{T+1}~.
    \end{equation}
\end{lemma}

A second crucial requirement for the construction is that the resulting matrix $A$ defines a game with some (sufficiently large) positive value.
This is formalized in the following lemma:

\begin{lemma}\label{lem:sep}
For any $\delta\in(0,1)$ and $A$ as defined in Construction \ref{assump:Aconst}, 
there is an absolute constant $c>0$ such that with probability at least $1-\delta:$
    \[
    \max_{\bw:\norm{\bw}_p\leq 1}\min_{\bp\in \Delta^{n-1}}\bp^\top A\bw~\geq~
    c\,\alpha\cdot 
    \begin{cases}
     d^{\half-\frac{1}{p}}/\sqrt{T} & \text{if~~}p\in[1,2]
    \\
     d^{\half-\frac{1}{p}}/\sqrt{T\log(d/\delta)} & \text{if~~}p\in(2,\infty]
    \end{cases}
    ~.
    \]
\end{lemma}

The proof of \lemref{lem:sep} is based on noting that $A\bw=\frac{\alpha}{\|\sum_{t=1}^{T+1}\bu_t\|_p}\mathbf{1}$
for $\bw:=\frac{1}{\|\sum_{t=1}^{T+1}\bu_t\|_p}\sum_{t=1}^{T+1}\bu_t$, which is a vector  in the $\ell_p$ ball,
and hence the maximum over the ball is at least $\frac{\alpha}{\|\sum_{t=1}^{T+1}\bu_t\|_p}$. We lower bound this value with high probability, based on Gaussian properties of the construction.

A final consistency requirement for the construction is that we need to choose the $\alpha,\beta$ parameters appropriately, to satisfy the constraint that the operator norm $\|A\|_{p\to\infty}$ (equivalently, the largest $\ell_q$-norm over $A$'s rows where $1/p+1/q=1$) is at most $1$, at least with high probability, implying that a suitable choice of $A$ exists. This can be ensured via the following lemma:

\begin{lemma}\label{lem: opnorm}
For any $\delta\in(0,1)$ and $A$ as defined in Construction \ref{assump:Aconst}, there is an absolute constant $C'>0$ such that with probability at least $1-\delta-\exp(-d/48):$
\begin{align*}
    \|A\|_{p\to\infty}\leq C'\cdot
    d^{\half-\frac{1}{p}}\sqrt{\log(nd/\delta)}\left[\alpha+\beta\sqrt{T}+\beta\sqrt{\log(nd/\delta)}\right]
    ~.
\end{align*}
Thus, as long as $\alpha,\beta,\delta$ satisfy 
\[
d^{\half-\frac{1}{p}}\sqrt{\log(nd/\delta)}\left[\alpha+\beta\sqrt{T}+\beta\sqrt{\log(nd/\delta)}\right]\leq\frac{1}{C'}~,
\]
it holds with probability at least $1-\delta-\exp(-d/48)$ that $ \|A\|_{p\to\infty}\leq 1$. 
\end{lemma}

The proof of the lemma appears in the proof section, and follows from Gaussian concentration properties as well. 
With these consistency components in place, the main task now is to show that the algorithm's output $\bw_{T+1}$ cannot output a vector with positive value (at least with high probability over the randomized choice of $A$, which implies that a suitable $A$ exists). This is formalized in the following proposition:

\begin{proposition}[Algorithm returns a non-separator]\label{prop:nosep}
Suppose $A$ is constructed as in Construction \ref{assump:Aconst}, and that $4\alpha/\beta\leq 1/\sqrt{T}$. Then for any $\delta\in (0,1)$, if $T\sqrt{80\log(2T/\delta)/n}\leq \frac{1}{4}$, then with probability at least $1-\delta-\exp\left(T\log(2n)-n/32\right)$ over the choice of $\bxi_1,\ldots,\bxi_T$, it holds that
\[
\sup_{\bw\in \reals^d}~\min_{\bp\in \Delta^{n-1}}~\bp^\top\sum_{j=1}^{T}(\bv_{j-1}-\bv_j)\bu_j^\top \bw~\leq~ 0~,
\]
and thus by \eqref{eq: Aw_{T+1}},
\[
\min_{\bp\in \Delta^{n-1}}\bp^\top A \bw_{T+1}~\leq~0~.
\]
\end{proposition}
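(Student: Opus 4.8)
The plan is to reduce the claimed inequality to a concentration statement about the random vectors $\bv_1,\dots,\bv_T$ of Construction~\ref{assump:Aconst}, and then prove that statement by an $\epsilon$-net argument over directions in $\reals^T$.

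\emph{Reduction.} Write $B:=\sum_{j=1}^{T}(\bv_{j-1}-\bv_j)\bu_j^\top$. Since $\bu_1,\dots,\bu_T$ are orthonormal, $\bw\mapsto(\bu_1^\top\bw,\dots,\bu_T^\top\bw)$ is onto $\reals^T$, so $\sup_{\bw}\min_{\bp\in\Delta^{n-1}}\bp^\top B\bw=\sup_{\bc\in\reals^T}\min_{\bp}\sum_{j}c_j(\bp^\top\bv_{j-1}-\bp^\top\bv_j)$. Summation by parts (using $\bv_0=\alpha\mathbf 1$) rewrites the inner sum as $\bp^\top\bigl(\alpha d_0\mathbf 1+\sum_{k=1}^{T}d_k\bv_k\bigr)$ with $d_0=c_1$, $d_k=c_{k+1}-c_k$, $d_T=-c_T$, where $\sum_{k=0}^{T}d_k=0$ and $\bd$ ranges over all such vectors. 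Since $\min_{\bp\in\Delta^{n-1}}$ of a linear form returns its smallest coordinate and the expression is positively homogeneous in $\bd$, the claim $\sup\le0$ is equivalent to: for every nonzero $\bd$ with $\sum_k d_k=0$, the vector $\alpha d_0\mathbf 1+\sum_{k\ge1}d_k\bv_k$ has a nonpositive entry. Such $\bd$ has $(d_1,\dots,d_T)\neq0$; writing $\hat\bd$ for its normalization and using $d_0=-\sum_{k\ge1}d_k$, the condition is $\min_{l}\bigl(\sum_{k=1}^{T}\hat d_k\bv_k\bigr)_l\le\alpha\langle\mathbf 1,\hat\bd\rangle$, and since $\alpha\langle\mathbf 1,\hat\bd\rangle\ge-\alpha\sqrt T\ge-\beta/4$ by the hypothesis $4\alpha/\beta\le1/\sqrt T$, it suffices to prove that with the stated probability,
\[
\text{for every unit }\hat\bd\in\reals^T,\qquad \min_{l\in[n]}\Bigl(\sum_{k=1}^{T}\hat d_k\bv_k\Bigr)_l\ \le\ -\tfrac{\beta}{4}. \qquad(\star)
\]

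\emph{Net argument.} For fixed $\hat\bd$ the coordinates of $\bz:=\sum_k\hat d_k\bv_k$ behave like nearly i.i.d.\ $\mathcal N(0,\beta^2)$, so $\min_l z_l\approx-\beta\sqrt{2\log n}\ll-\beta/4$ with overwhelming probability; only uniformity over $\hat\bd$ is at issue. I would fix an $\epsilon$-net $\Ccal$ of $S^{T-1}$ with $|\Ccal|\le(3/\epsilon)^T$ and $\epsilon=3/(2n)$. On the event (of probability $\ge1-\delta$, by Gaussian concentration of the entries of $\bv_1,\dots,\bv_T$) that $\sum_{k=1}^{T}(\bv_k)_l^2\le R^2$ for all $l$, with $R=O(\beta\sqrt{T\log(nT/\delta)})$, the map $\hat\bd\mapsto\min_l(\sum_k\hat d_k\bv_k)_l$ is $R$-Lipschitz and $\epsilon R\le\beta/4$; hence $(\star)$ reduces to showing $\min_l(\sum_k d'_k\bv_k)_l\le-\beta/2$ for every net point $\hat\bd'\in\Ccal$.

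\emph{Per-net-point estimate --- the crux.} Fix a unit $\hat\bd'$. Writing $\bv_t=\beta\bxi_t-\bc_t$ with $\bc_t:=\beta M_tM_t^\top\bxi_t$, we get $\bz':=\sum_{t}\hat d'_t\bv_t=\beta\bg-\bc$ with $\bg:=\sum_t\hat d'_t\bxi_t$ and $\bc:=\sum_t\hat d'_t\bc_t$. Two facts drive the estimate. First, $\bg$ is \emph{exactly} $\mathcal N(0,I_n)$ with i.i.d.\ coordinates --- this holds no matter how the construction unfolds, since the $\bxi_t$ are i.i.d.\ $\mathcal N(0,I_n)$ and $\|\hat\bd'\|_2=1$. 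Second, $\bc$ lies in $\mathrm{span}(\bv_0,\dots,\bv_{T-1},\bp_1,\dots,\bp_T)$, a subspace of dimension $\le2T$, and on a high-probability event $E$ \emph{not depending on} $\hat\bd'$ --- namely $\sum_{t}\|M_tM_t^\top\bxi_t\|_2^2\le4T^2$, a $\chi^2$ tail bound with at most $\sum_t s_t\le2T^2$ degrees of freedom --- Cauchy--Schwarz gives $\|\bc\|_2^2\le\beta^2\sum_t\|M_tM_t^\top\bxi_t\|_2^2\le4\beta^2T^2$. Hence on $E$ at most $O(T^2)$ indices $l$ satisfy $|c_l|>\beta/2$, which by the hypothesis $n\gg T^2$ is at most $\tfrac12 pn$ with $p:=\Pr[\mathcal N(0,1)\le-1]>0$; and for every other index, $z'_l=\beta g_l-c_l\le-\beta/2$ whenever $g_l\le-1$. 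Thus on $E$ the event $\{\min_l z'_l>-\beta/2\}$ forces $\#\{l:g_l\le-1\}\le\tfrac12 pn$, whose probability is $e^{-\Omega(n)}$ by a Chernoff bound for $\mathrm{Binomial}(n,p)$ --- and since this uses only the marginal law of $\bg$, no independence between $\bg$, $\bc$, $E$ is needed. A union bound over the $|\Ccal|\le\exp(T\log(2n))$ net points, together with $\Pr[E^c]$ and the row-norm event (absorbed using $d>cT$, $n>cT^2\log T$), yields $(\star)$ with failure probability at most $\delta+\exp(T\log(2n)-n/32)$; combining with \lemref{lem:sim} gives the proposition.

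The main obstacle is precisely this per-net-point estimate. Unlike \thmref{thm:one-sided}, one cannot reveal $A$ row by row, so the construction forces the algorithm to learn $A$ only through the \emph{subspaces} $U_t=\mathrm{span}(\bv_0,\dots,\bv_{t-1},\bp_1,\dots,\bp_t)^\perp$, and the heart of the argument is to split $\bz'=\sum_t\hat d'_t\bv_t$ into an \emph{exactly} i.i.d.\ Gaussian component plus a correction confined to a $\le 2T$-dimensional subspace, bound that correction \emph{uniformly over the net}, and balance all thresholds so that the per-point failure $e^{-\Omega(n)}$ stays far below the net cardinality $\exp(O(T\log n))$ while the auxiliary $\chi^2$ and row-norm events fit into $\delta+\exp(T\log(2n)-n/32)$. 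Establishing that such a $\bz'$ has a coordinate below $-\beta/4$ for \emph{every} direction simultaneously --- equivalently, that no $\bw$ separates the rows of $A$ --- is what makes the two-sided lower bound substantially harder than the one-sided one.
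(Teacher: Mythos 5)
Your proposal is correct and follows essentially the same route as the paper's proof: the same summation-by-parts reduction to a uniform statement over directions in $\reals^T$, the same decomposition of $\sum_k r_k\bv_k$ into an exactly i.i.d.\ Gaussian part plus a correction confined to a $\le 2T$-dimensional subspace, the same $\epsilon$-net-plus-Lipschitz uniformization, and the same final inequality $|r_0|\le\sqrt{T}\,\|\br\|$ combined with $4\alpha/\beta\le 1/\sqrt{T}$ (which the paper packages as a PSD-matrix lemma and you obtain directly by Cauchy--Schwarz). The only substantive adjustment needed is to take the threshold in your event $E$ to be $4T^2\log(2T/\delta)$ rather than $4T^2$ so that $\Pr[E^c]\le\delta$ for every $\delta$ (the hypothesis $T\sqrt{80\log(2T/\delta)/n}\le 1/4$ still guarantees the number of spoiled coordinates is far below $\frac{1}{2}pn$), exactly as in the paper's Lemma~\ref{lem:remainder}.
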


The formal proof of \propref{prop:nosep} is rather involved. At a high level, we use Gaussian concentration properties and an $\epsilon$-net argument, to show that with high probability over the randomized construction, it holds for any $\bw$ that the vector $\sum_{j=1}^{T}(\bv_{j-1}-\bv_j)\bu_j^\top \bw$ contains an entry which is upper bounded by a certain expression, which we then prove to be non-positive. Therefore, $\min_{\bp\in\Delta}\bp^\top\sum_{j=1}^{T}(\bv_{j-1}-\bv_j)\bu_j^\top \bw$ is non-positive for all $\bw$.

We can now assemble the claims outlined throughout this section,
and set the parameters $\alpha,\beta$ appropriately in order to prove \thmref{thm:main}:

\begin{proof}[Proof of \thmref{thm:main}]

 We construct the matrix $A$ as in Construction \ref{assump:Aconst}. 
Examining the conditions in Lemmas \ref{lem:sep}, \ref{lem: opnorm} and \propref{prop:nosep}, and ignoring log factors momentarily, we see that in order for all of them to be applicable,
we must ensure $\alpha\lesssim \beta/\sqrt{T}$ (\propref{prop:nosep}), satisfy 
$\beta\lesssim d^{1/p-1/2}/\sqrt{T}$
and
$\alpha\lesssim d^{1/p-1/2}$
(\lemref{lem: opnorm}), while the game value guarantee is $\approx \alpha d^{1/2-1/p}/\sqrt{T}$
(\lemref{lem:sep}). Thus, to make the game value as large
as possible, under the constraints we should choose $\beta\approx d^{1/p-1/2}/\sqrt{T}$ and $\alpha\approx \beta/\sqrt{T}\approx d^{1/p-1/2}/T$, thus leading to a value gap of roughly $\alpha d^{1/2-1/p}/\sqrt{T}\approx 1/T^{3/2}$.

More formally, pick $\delta=1/8$ in Lemmas \ref{lem:sep}, \ref{lem: opnorm} and \propref{prop:nosep}, and
for the universal constant $C'$ given by \lemref{lem: opnorm}, we choose $\beta=\frac{1}{4C'd^{1/2-1/p}(\sqrt{T\log(8nd)}+\log(8nd))}$, as well as $\alpha=\frac{\beta}{4\sqrt{T}}=\frac{1}{16C'd^{1/2-1/p}\sqrt{T}(\sqrt{T\log(8nd)}+\log(8nd))}$. It is easily verified that with this choice, as well as the theorem assumptions, the conditions in Lemmas \ref{lem:sep}, \ref{lem: opnorm} and \propref{prop:nosep} are all satisfied, and with a union bound, the resulting matrix $A$ satisfies simultaneously with some positive probability  $\|A\|_{p\to\infty}\leq 1$ and $\min_{\bp\in\Delta^{n-1}}\bp^\top A\bw_{T+1}\leq 0$, as well as
\begin{align*}
\max_{\bw:\|\bw\|_p\leq 1}\min_{\bp\in\Delta^{n-1}}\bp^\top A\bw &\geq 
 c\,\alpha\cdot 
    \begin{cases}
     d^{\half-\frac{1}{p}}/\sqrt{T} & \text{if~~}p\in[1,2]
    \\
     d^{\half-\frac{1}{p}}/\sqrt{T\log(d/\delta)} & \text{if~~}p\in(2,\infty]
    \end{cases}
    \\&=
\frac{c}{16C'T(\sqrt{T\log(8nd)}+\log(8nd))}
\cdot 
    \begin{cases}
    1 & \text{if~~}p\in[1,2]
    \\
     1/\sqrt{\log(8d)} & \text{if~~}p\in(2,\infty]
    \end{cases}
~.
\end{align*}
Hence, by the probabilistic method, a suitable fixed matrix $A$ satisfying all of the above must exist, and the result simplifies to the statement in Theorem~\ref{thm:main} for sufficiently large $C>0$.
\end{proof}

\begin{remark}[Number of oracle queries vs. number of matrix-vector multiplications]
    As defined, the two-sided oracle allows two matrix-vector multiplications (one from each side of the matrix $A$) in each oracle call. Thus, up to a factor of $2$, any complexity lower bound for a two-sided oracle automatically implies a lower bound on the required total number of matrix-vector multiplications (on either side of the matrix). However, we note that our proof technique can be potentially applied to get a slightly stronger result: Namely, a similar lower bound on the number of \emph{alternations} between a right matrix-vector multiplication and a left matrix-vector multiplication, assuming the matrix size is large enough.\footnote{Specifically, consider a model where the algorithm performs a single matrix-vector multiplication in each iteration.  Then we can modify Construction \ref{assump:Aconst}, so that if there is a sequence of right matrix-vector multiplications using $\bw_{t_1},\ldots,\bw_{t_2}$, we can simply pick a single vector $\bu_{t_1}$ which is orthogonal to all of them, rather than constructing a sequence of vectors $\bu_{t_1},\ldots,\bu_{t_2}$. Similarly, if there is a sequence of left matrix-vector multiplications using $\bp_{t_1},\ldots,\bp_{t_2}$, we can pick a single $\bv_{t_1}$ which is orthogonal to all of them, rather than constructing a sequence of vectors $\bv_{t_1},\ldots,\bv_{t_2}$. Thus, the total number of vectors $\bu_{1},\bv_1,\bu_{2},\bv_2,\ldots$, and hence ultimately the lower bound, does not scale with the total number of matrix-vector multiplications performed by the algorithm, but rather with the number of alternations between left and right matrix-vector multiplications.}
\end{remark}

\section{Applications}

In this section, we discuss some key implications of our results.

\subsection{Nash equilibria}

A prominent problem class of interest is that of simplex/simplex games, which are well-known to model zero-sum games. Given a matrix with bounded entries $A\in[-1,1]^{n\times d}$, or equivalently $\|A\|_{1\to\infty}\leq 1$, the problem is to solve
\begin{equation} \label{eq: simplex/simplex}
    \max_{\bw\in\Delta^{d-1}}\min_{\bp\in\Delta^{n-1}}\bp^\top A\bw~.
\end{equation}
Recall that for this problem, a two-sided oracle is exactly equivalent to a first-order (or gradient) oracle, which returns the gradient of $(\bp,\bw)\mapsto \bp^\top A\bw $. Thus, a lower bound under the two-sided oracle model is precisely a lower bound for solving zero-sum games using a first-order oracle.

To apply \thmref{thm:main}, which holds for $\ell_p$ balls, we provide a reduction from simplex/simplex games to $\ell_1$/simplex games, showing that any lower bound for the latter implies a lower bound for the former. In other words, any algorithm for solving Problem~(\ref{eq: simplex/simplex}) (over the simplex) can be readily converted to an algorithm for solving $\max_{\bw:\|\bw\|_1\leq 1}\min_{\bp\in\Delta^{n-1}}\bp^\top A\bw$ (in a slightly modified dimension).

\begin{proposition}\label{prop:l1tosimp}
Suppose there is an algorithm $\Acal$ that for any matrix $A\in [-1,+1]^{n\times 2d}$, after interacting with $\Ocal_2^{A}$ for $T$ iterations, returns a vector $\bw_{T+1}\in \Delta^{2d-1}$ such that
    \[
\left(\max_{\bw\in\Delta^{2d-1}}\min_{\bp\in\Delta^{n-1}}\bp^\top A\bw\right)-\left(\min_{\bp\in\Delta^{n-1}}\bp^\top A\bw_{T+1}\right)
~\leq~ \epsilon(T,n,d)~.
    \]
    Then there is an algorithm such that for any $A\in [-1,+1]^{n\times d}$, after interacting with $\Ocal_2^{A}$ for $T$ iterations, it returns a vector $\bw_{T+1}\in\reals^d,\norm{\bw_{T+1}}_1\leq 1$ such that
    \[
\left(\max_{\bw\in\reals^d:\norm{\bw}_1\leq 1}\min_{\bp\in\Delta^{n-1}}\bp^\top A\bw\right)-\left(\min_{\bp\in\Delta^{n-1}}\bp^\top A\bw_{T+1}\right)
~\leq~\epsilon(T,n,d)~.
    \]
\end{proposition}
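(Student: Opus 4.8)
The plan is to reduce the $\ell_1$-ball game to the simplex game by the standard ``lifting'' trick: a vector $\bw\in\reals^d$ with $\norm{\bw}_1\le 1$ can be written as the image of a point in $\Delta^{2d-1}$ under the map $(\bq^+,\bq^-)\mapsto \bq^+-\bq^-$, where $\bq^+,\bq^-\in\reals^d_{\ge 0}$ and $\sum_i(q^+_i+q^-_i)\le 1$ (the slack being absorbed into an extra coordinate of the simplex, so formally we work in $\Delta^{2d}$). Given the input matrix $A\in[-1,+1]^{n\times d}$, I would form the matrix $\tilde A:=[\,A\mid -A\,]\in[-1,+1]^{n\times 2d}$ (padding with a zero column if one wants the extra slack coordinate), so that for any $\bw=\bq^+-\bq^-$ one has $\tilde A(\bq^+,\bq^-)^\top = A\bw$. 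Then $\max_{\tilde\bw\in\Delta^{2d-1}}\min_{\bp}\bp^\top\tilde A\tilde\bw = \max_{\norm{\bw}_1\le 1}\min_{\bp}\bp^\top A\bw$, since every simplex point of $\tilde A$ corresponds to an $\ell_1$-ball point of $A$ with the same payoff vector, and conversely every $\ell_1$-ball point lifts to a simplex point.

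Next I would check that the simulation is faithful at the oracle level. Running $\Acal$ on $\tilde A$, whenever it issues a query $(\bp,\tilde\bw)\in\reals^n\times\reals^{2d}$, we need to produce $\tilde A\tilde\bw$ and $\bp^\top\tilde A$ from a single call to $\Ocal_2^A$. Writing $\tilde\bw=(\bw^{(1)},\bw^{(2)})$ with $\bw^{(1)},\bw^{(2)}\in\reals^d$, we have $\tilde A\tilde\bw = A(\bw^{(1)}-\bw^{(2)})$, obtained by querying $\Ocal_2^A$ at the vector $\bw^{(1)}-\bw^{(2)}$; and $\bp^\top\tilde A = [\,\bp^\top A\mid -\bp^\top A\,]$, obtained from the same query's left-multiplication output. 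So one call to $\Ocal_2^A$ suffices per call to $\Ocal_2^{\tilde A}$, and after $T$ iterations $\Acal$ produces some $\tilde\bw_{T+1}\in\Delta^{2d-1}$; the new algorithm outputs $\bw_{T+1}:=\bw^{(1)}_{T+1}-\bw^{(2)}_{T+1}$, which satisfies $\norm{\bw_{T+1}}_1\le \norm{\tilde\bw_{T+1}}_1=1$.

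Finally I would transfer the accuracy guarantee. By the assumed bound applied to $\tilde A$,
\[
\left(\max_{\tilde\bw\in\Delta^{2d-1}}\min_{\bp\in\Delta^{n-1}}\bp^\top\tilde A\tilde\bw\right)-\left(\min_{\bp\in\Delta^{n-1}}\bp^\top\tilde A\tilde\bw_{T+1}\right)\le \epsilon(T,n,d)~.
\]
The first term equals $\max_{\norm{\bw}_1\le 1}\min_{\bp}\bp^\top A\bw$ by the payoff-set identity above, and the second term equals $\min_{\bp}\bp^\top A\bw_{T+1}$ since $\tilde A\tilde\bw_{T+1}=A\bw_{T+1}$. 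Substituting gives exactly the claimed inequality for the $\ell_1$-ball game. The main thing to be careful about — the only place a genuine argument rather than bookkeeping is needed — is the identity $\max_{\Delta^{2d-1}}\min_\bp = \max_{\norm{\cdot}_1\le1}\min_\bp$, i.e. that enlarging to the full $\ell_1$ ball via the $\pm$-split does not change the value; this follows because the two domains induce the same set of achievable payoff vectors $\{A\bw:\norm{\bw}_1\le 1\}=\{\tilde A\tilde\bw:\tilde\bw\in\Delta^{2d-1}\}$ (one needs the extra slack coordinate precisely so that the ``$\le 1$'' in the $\ell_1$ constraint matches the ``$=1$'' in the simplex constraint), and $\min_\bp\bp^\top(\cdot)$ depends only on that payoff vector. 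Everything else is routine.
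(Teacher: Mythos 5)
Your proposal is correct and matches the paper's proof essentially step for step: the same doubling $\tilde A=(A;-A)$, the same $\pm$-split of an $\ell_1$-ball vector into a simplex vector with inverse $\tilde\bw\mapsto\bw^{(1)}-\bw^{(2)}$, the same one-to-one oracle simulation, and the same transfer of the suboptimality bound. The only cosmetic difference is that you argue the two game values are \emph{equal} (hence your care about the slack coordinate, which is in fact unnecessary since mass can be added symmetrically to $\bq^+$ and $\bq^-$ without changing $\bq^+-\bq^-$), whereas the paper only needs the one inequality $\min_\bp\bp^\top\psi(A)\psi(\bw^*)\le\max_{\tilde\bw\in\Delta^{2d-1}}\min_\bp\bp^\top\psi(A)\tilde\bw$.
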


The formal proof is deferred to the proof section. In a nutshell, the idea is that given a matrix $A\in [-1,+1]^{n\times d}$ and an algorithm for a simplex/simplex matrix game, we can feed the algorithm with the matrix $(A;-A)\in \reals^{n\times 2d}$, and convert the algorithm's output $\bw$ (a vector in the $(2d-1)$-simplex) to a $d$-dimensional vector $\bw'$ in the $\ell_1$ unit ball, so that $A\bw'=(A;-A)\bw$, leading to similar guarantees. We note that the reduction does require us to modify the matrix width $d$ by a factor of $2$, but this will not affect our lower bound by more than a small constant factor (as the bound we will show applies to any sufficiently large $d$). We also note that the computational complexity of the two algorithms in the reduction are essentially identical. 

Combining \thmref{thm:main} for $p=1$, with \propref{prop:l1tosimp}, results in the following corollary:

\begin{corollary}
No deterministic first-order algorithm can guarantee returning a $\epsilon$-approximate Nash equilibrium of a matrix $A\in[-1,1]^{n\times d}$ (for sufficiently large $d,n\in\mathrm{poly}(1/\epsilon)$),
using less than $\tilde{\Omega}(\epsilon^{-2/3})$ calls to a two-sided oracle $\Ocal^A_2$. Moreover, as this lower bound is nearly-matched by \citet{karmarkar2026solving}, our result settles the first-order oracle complexity of this task (up to log factors).
\end{corollary}

Our lower bound for approximating Nash equilibria is nearly-matched by \citet{karmarkar2026solving}, hence our result settles the oracle complexity of this task (up to log factors) in the natural model of matrix-vector multiplications.

\subsection{Linear separability}

As previously mentioned, $\ell_2$/simplex games model the task of finding a linear separator: Given a linearly separable classification dataset $(\bx_l,y_l)_{l=1}^{n}\subset\reals^{d}\times\{\pm1\}$ with bounded norm $\max_{l\in[n]}\|\bx_l\|_2\leq 1$, consider the data matrix $A\in\reals^{n\times d}$ with $A_l:=y_l \bx_l$. Then $\|A\|_{2\to \infty}\leq 1$, and the problem
\[
\max_{\bw\in\reals^d:\|\bw\|_2\leq 1}\min_{\bp\in\Delta^{n-1}}\bp^\top A\bw
\]
corresponds to finding a linear separator with maximal margin $\gamma_A$.

To the best of our knowledge, no oracle complexity lower bounds for this task that appear in the current literature apply to accelerated algorithms, which as we showed in Theorem~\ref{thm:one-sided}, provably require two-sided multiplications of the data matrix $A$.
Applying \thmref{thm:main} for $p=2$ results in the following corollary:

\begin{corollary}
No deterministic algorithm can guarantee returning a linear separator of a classification dataset as above (for sufficiently large $d,n\in\mathrm{poly}(1/\gamma_A)$),
using less than $\tilde{\Omega}(\gamma_A^{-2/3})$ calls to a two-sided oracle $\Ocal^A_2$ of the data matrix.
\end{corollary}

Our lower bound for finding a linear separator is nearly-matched by \citet{karmarkar2026solving}, hence our result settles the oracle complexity of this task (up to log factors) in the natural model of matrix-vector multiplications.

\subsection{Combinatorial applications}

Another problem class which received notable interest in the literature is that of $\ell_\infty$/simplex games, also referred to as ``box-simplex games'', of the form
\[
\max_{\bw\in[-1,1]^d}\min_{\bp\in\Delta^{n-1}}\bp^\top A\bw~.
\]
Solving such games appears as an important subroutine in a variety of applications, mostly of combinatorial nature, such as the computing a graph's max-flow \citep{sherman2017area}, bipartite matching \citep{assadi2022semi},
densest subgraph \citep{boob2019faster,nguyen2024multiplicative}, as well as other computational tasks such as optimal transport \citep{jambulapati2019direct}.

\citet{sherman2017area} showed that given a $\ell_\infty$/simplex game with matrix $A$ satisfying $\|A\|_{\infty\to\infty}\leq 1$, an $\epsilon$-suboptimal solution can be returned using $O(\log(d)\epsilon^{-1}\log(\epsilon^{-1}))$ two-sided queries. To the best of our knowledge, no oracle complexity lower bound is known for this task. Applying Theorem~\ref{thm:main} for $p=\infty$ results in the following corollary:

\begin{corollary}
    No deterministic algorithm can guarantee returning an $\epsilon$-suboptimal solution to a $\ell_\infty$/simplex game with respect to a $n\times d$ matrix $A$ (for sufficiently large $d,n\in\mathrm{poly}(1/\epsilon)$), using less than $\tilde{\Omega}(\epsilon^{-2/3})$ calls to a two-sided oracle $\Ocal^A_2$.
\end{corollary}

Notably, as opposed to the previously discussed applications for $p\in\{1,2\}$, this result leaves open a $\tilde{O}(\epsilon^{-1/3})$ gap between the best-known upper bound and the lower bound we have established, leaving open an interesting problem for future work.

\section{Proofs} \label{sec: proofs}

\subsection{Proof of \lemref{lem: one sided}}

We start by noting that the case $p=2$ simply follows from orthonormality of $A$'s rows, since in that case it holds deterministically that
\[
\frac{\|A_l\|_2^2}{\|\sum_{t=1}^T A_t\|_2}
=\frac{\|A_l\|_2^2}{\sqrt{\sum_{t=1}^{T+1}\|A_t\|_2^2}}=\frac{1}{\sqrt{T+1}}~.
\]
We move on to prove the general probabilistic case, and denote $\bg_t:=\Pi_t(\bxi_t)$ so that $A_t=\frac{\bg_t}{\|\bg_t\|_q}$. Note that projecting a Gaussian results in a Gaussian with smaller (or equal) variance, and that in this case $\bxi_t$ is projected onto a subspace of dimension at least $d-2t\geq d-2T\geq \half d$. We can apply a standard Gaussian concentration bound, concretely the Laurent–Massart bound, to ensure that with probability at least $\frac{3}{4T}:$
\begin{align*}
\|\bg_t\|_2^2 \geq \half d - 2\sqrt{d\log(4T)/2}
\geq \frac{1}{4}d~,
\end{align*}
an event which further implies that
\begin{align*}
\|\bg_t\|_q \geq d^{1/q-1/2}\|\bg_t\|_2 \geq \half d^{1/q}~.
\end{align*}
Furthermore, a standard Gaussian tail bound $\Pr(|g_{t,i}|> z)\leq 2\exp(-z^2/2)$ implies via a union bound that with probability at least $\frac{3}{4T}:$
\begin{align*}
\|\bg_t\|_q\leq d^{1/q} \sqrt{2\log(16Td)}
~~~\text{and}~~~
\|\bg_t\|^2_2\leq 2d \log(8Td)~.
\end{align*}
Union bounding over $t$, we get that with probability at least $\half:$
\begin{align*}
\|A_l\|_2^2 = \frac{\|\bg_l\|_2^2}{\|\bg_l\|_q^2}\geq \frac{\frac{1}{4}d}{2d^{2/q}\log(16Td)}
=\frac{d^{1-2/q}}{8\log(16Td)}~,
\end{align*}
and also by using the orthogonality of $A$'s rows,
\begin{align*}
\left\|\sum_{t=1}^{T+1}A_t\right\|_p
&\leq d^{1/p-1/2}\left\|\sum_{t=1}^{T+1}A_t\right\|_2
=d^{1/p-1/2}\sqrt{\sum_{t=1}^{T+1}\|A_t\|_2^2}
=d^{1/p-1/2}\sqrt{\sum_{t=1}^{T+1}\frac{\|\bg_t\|_2^2}{\|\bg_t\|_q^2}}
\\&\leq
d^{1/p-1/2} \sqrt{(T+1)\cdot \frac{2d\log(8Td)}{\frac{1}{4}d^{2/q}}}
\\&\leq
4d^{1/p-1/q}\sqrt{(T+1)\log(8Td)}~.
\end{align*}
Thus overall,
\begin{align*}
\frac{\|A_l\|_2^2}{\left\|\sum_{t=1}^{T+1}A_t\right\|_p}
&\geq \frac{d^{1-2/q}}{8\log(16Td)\cdot 4d^{1/p-1/q}\sqrt{(T+1)\log(8Td)}}
\\&=\frac{1}{32\sqrt{T+1}\cdot\log(16Td)\sqrt{\log(8Td)}}~,
\end{align*}
where we used the fact that $1-\frac{2}{q}-\frac{1}{p}+\frac{1}{q}=1-\frac{1}{q}-\frac{1}{p}=0$.
Crudely bounding $\log(8Td)\leq \log(16Td)\leq \log(16d^2)=2\log(4d)$ completes the proof.

\subsection{Proof of \lemref{lem:sim}}

By the orthogonality assumptions in the construction, for any $t$,
\[
\bp_t^\top A_t~=~\bp_t^\top\left(\sum_{j=1}^{t}(\bv_{j-1}-\bv_j)\bu_j^\top\right)~=~\bp_t^\top\left(\sum_{j=1}^{T}(\bv_{j-1}-\bv_j)\bu_j^\top+\bv_T\bu_{T+1}^{\top}\right)~=~\bp_t^\top A~. 
\]
Similarly,
\[
A_t\bw_t~=~\left(\sum_{j=1}^{t}(\bv_{j-1}-\bv_j)\bu_j^\top\right)\bw_t~=~\left(\sum_{j=1}^{T}(\bv_{j-1}-\bv_j)\bu_j^\top+\bv_T\bu_{T+1}^{\top}\right)\bw_t=A\bw_t~.
\]
Since $\bp_t,\bw_t$ at iteration $t$ are determined by the previous oracle calls, it follows by induction that the sequences of vectors $\{\bp_t\}_{t=1}^{T},\{\bw_t\}_{t=1}^{T+1}$ are those produced by the algorithm given access to the oracle $\Ocal_2^{A}$ for the matrix $A$. Finally, the expression for $A\bw_{T+1}$ follows from the definition of $A$, and the fact that $\bu_{T+1}$ is chosen to be orthogonal to $\bw_{T+1}$.

\subsection{Proof of \lemref{lem:sep}}
    Consider $\bw=\frac{1}{\norm{\sum_{t=1}^{T+1}\bu_t}_p}\sum_{t=1}^{T+1}\bu_t$, which by construction satisfies $\norm{\bw}_p\leq 1$. Since $\bu_1,\ldots,\bu_{T+1}$ are orthonormal, we have
    \begin{align*}
    A\bw &= \frac{1}{\left\|\sum_{t=1}^{T+1}\bu_t\right\|_p}\cdot\left(\sum_{j=1}^{T}(\bv_{j-1}-\bv_j)\bu_j^\top+\bv_T\bu_{T+1}^{\top}\right)\left(\sum_{t=1}^{T+1}\bu_t\right)\notag\\
    &=\frac{1}{\left\|\sum_{t=1}^{T+1}\bu_t\right\|_p}\cdot\left(\sum_{j=1}^{T}(\bv_{j-1}-\bv_j)+\bv_T\right)
    \\&=\frac{1}{\left\|\sum_{t=1}^{T+1}\bu_t\right\|_p}\cdot\bv_0
    \\&=\frac{\alpha}{\left\|\sum_{t=1}^{T+1}\bu_t\right\|_p}\cdot \mathbf{1}~.
    \end{align*}
Thus, each entry of $A\bw$ equals $\alpha/\|\sum_{t=1}^{T+1}\bu_t\|_p$, and since this holds for the $\bw$ we have chosen, the maximum of $\min_{\bp\in \Delta^{n-1}}\bp^\top A\bw$ over all vectors $\bw$ in the unit $\ell_p$ ball can only be larger.
Therefore, it remains to upper bound the quantity $\|\sum_{t=1}^{T+1}\bu_t\|_p$.

For $p\in[1,2]$, it suffices to recall that $\bu_1,\dots,\bu_{T+1}$ are orthonormal, and by comparing the $\ell_p$ and $\ell_2$ norms we see that
\[
\left\|\sum_{t=1}^{T+1}\bu_t\right\|_p
\leq
d^{\frac{1}{p}-\half}\left\|\sum_{t=1}^{T+1}\bu_t\right\|_2
=d^{\frac{1}{p}-\half}\sqrt{T+1}
\leq 2d^{\frac{1}{p}}\sqrt{T}
~,
\]
proving the first lower bound in the lemma (deterministically).

For $p\in(2,\infty)$ we use a probabilistic argument. Note that for any index $i\in[d]$, the sequence $u_{1,i},\dots,u_{T+1,i}$ is constructed so that for any $t\in[T+1]:~u_{t,i}\mid u_{1,i},\dots,u_{t-1,i}$ is a mean-zero sub-Gaussian random variable with sub-Gaussian norm bounded by $O(1/d)$ (note that the projected dimension is still at least on the order of $d$ by assumption that $T\ll d$). Hence, a martingale bound (cf. \citealp{shamir2011variant}) ensures that with probability at least $1-\delta/d$ it holds that $|\sum_{t\in[T+1]}u_{t,i}|\leq C\sqrt{T\log(d/\delta)/d}$
for some universal constant $C>0$. Union bounding over $i\in[d]$ so that the bound holds for all $i$ with probability at least $1-\delta$, under this probable event we get
\begin{align*}
\left\|\sum_{t=1}^{T+1}\bu_t\right\|_p
&=\left(\sum_{i\in[d]}\left|\sum_{t\in[T+1]}u_{t,i}\right|^p\right)^{1/p}
\leq \left(\sum_{i\in[d]}C^p\left(\frac{T\log(d/\delta)}{d}\right)^{p/2}\right)^{1/p}
\\&=C d^{1/p-1/2}\sqrt{T\log(d/\delta)}~,
\end{align*}
proving the claimed bound for $p\in(2,\infty)$.

For $p=\infty$, under the same probable event as before, it holds that $\|\sum_{t=1}^{T+1}\bu_t\|_\infty
=\max_{i\in[d]}|\sum_{t\in[T+1]}u_{t,i}|\leq C\sqrt{T\log(d/\delta)/d}$ as well.

\subsection{Proof of \lemref{lem: opnorm}}

Recall that $\|A\|_{p\to\infty}=\max_{l\in[n]}\|A_l\|_q$ where $q\in[1,\infty]$ is the dual exponent satisfying $1/p+1/q=1$, and so we set out to bound the $q$-norm of $A$'s rows.
For any $l\in[n]$, it holds that
\begin{align}
\|A_l\|_q&=\left\|\sum_{j=1}^{T}(v_{j-1,l}-v_{j,l})\bu_j^\top+v_{T,l}\bu_{T+1}^{\top}\right\|_q \nonumber
\\&\leq \left\|\sum_{j=1}^{T}(v_{j-1,l}-v_{j,l})\bu_j^\top\right\|_q
+\|v_{T,l}\bu_{T+1}^{\top}\|_q~.
\label{eq: Al 2 terms}
\end{align}
We turn to bound the terms above. For $l\in[n],i\in[d]$, we denote $D^j_l:=v_{j-1,l}-v_{j,l}$ and $S_{l,i}:=\sum_{j=1}^{T}D^j_{l}u_{j,i}$. Recall (as in the proof of \lemref{lem:sep}) that each $u_{j,i}$ is a sub-Gaussian, mean-zero random variable with sub-Gaussian norm bounded by $O(1/d)$, and therefore conditioned on $\bv_0,\dots,\bv_T$,
the random variable
$S_{l,i}$ has sub-Gaussian norm at most $O(\frac{1}{d})\cdot\sum_{j=1}^{T}(D_l^j)^2$
(this follows from a basic summation property for sub-Gaussian norms,
cf. \citealp{vershynin2018high}). Therefore, there exists a universal constant $C_1>0$ such that for any $z>0:$
\begin{align*}
\Pr\left(|S_{l,i}|\geq z\mid \bv_0,\dots,\bv_T\right)\leq 2\exp\left(-\frac{C_1dz^2}{\sum_{j=1}^{T}(D_l^j)^2}\right)~.
\end{align*}
The numerical inequality $(a-b)^2\leq 2(a^2+b^2)$ implies that $\sum_{j=1}^{T}(D^j_l)^2\leq 4\sum_{j=0}^{T}v_{j,l}^2=4\alpha^2+4\sum_{j=1}^{T}v_{j,l}^2$, and further note that while $v_{1,l},\dots,v_{T,l}$ are not independent, it does hold that conditionally $v_{t,l}\mid v_{0,l},\dots v_{t-1,l}\sim \Ncal(0,\sigma_{t,l}^2)$ for some $\sigma_{t,l}\leq \beta$ since projecting a Gaussian cannot increase its variance. Therefore, \lemref{lem: exp tail bound}
ensures that with probability at least $1-\delta:~\sum_{j=1}^{T}v_{j,l}^2\leq 8\beta^2(T+\log(1/\delta))$, so via a union bound, we get that with probability at least $1-2\delta$ it holds for all $z>0$ that
\[
\Pr\left(|S_{l,i}|\geq z\right)\leq 2\exp\left(-\frac{C_1dz^2}{4\alpha^2+32\beta^2(T+\log(1/\delta))}\right)~.
\]
Equivalently, choosing $z$ by equating the above to $\delta$, slightly simplifying the expression and union bounding over $l\in[n]$ and $i\in[d]$, we see that
for some sufficiently large absolute constant $C>0$ it holds with probability at least $1-\delta$ that
\begin{align} \label{eq: Sli}
|S_{l,i}|\leq C\sqrt{\frac{\log(nd/\delta)}{d}}\left[\alpha+\beta(\sqrt{T}+\sqrt{\log(nd/\delta)})\right]~,
\text{~~for all }l\in[n],\,i\in[d]~. 
\end{align}
Under this event, for any $q<\infty$, it holds that
\begin{align}
\left\|\sum_{j=1}^{T}(v_{j-1,l}-v_{j,l})\bu_j^\top\right\|_q
&=\left(\sum_{i=1}^{d}\left|S_{l,i}\right|^q\right)^{1/q}
\label{eq: Al 1st term}\\&\leq \left(\sum_{i=1}^{d}\left(C\sqrt{\frac{\log(nd/\delta)}{d}}\left[\alpha+\beta(\sqrt{T}+\sqrt{\log(nd/\delta)})\right]\right)^{q}\right)^{1/q}
\nonumber\\&= Cd^{1/q-1/2}\sqrt{\log(nd/\delta)}\left[\alpha+\beta(\sqrt{T}+\sqrt{\log(nd/\delta)})\right]~, \nonumber
\end{align}
thus bounding the first term in \eqref{eq: Al 2 terms}. 
We similarly derive the bound for $q=\infty$
by simply applying \eqref{eq: Sli}, since $\|\sum_{j=1}^{T}(v_{j-1,l}-v_{j,l})\bu_j^\top\|_\infty
=\max_{i\in[d]}|S_{l,i}|$.

To bound the second term in \eqref{eq: Al 2 terms}, we apply \lemref{lem:rescaledGaussian} twice to get that with probability at least $1-2\delta-\exp(-d/48):$
\begin{align*}
\norm{\bv_{T}}_{\infty}~\leq~\beta\sqrt{2\log(2n/\delta)}
\text{~~~~~and~~~~~}
\norm{\bu_{T+1}}_{\infty}\leq \sqrt{8\log(2d/\delta)/d}~,
\end{align*}
and under this event, for any $l\in[n]$ it holds that
\begin{align}
\|v_{T,l}\bu_{T+1}^{\top}\|_q
&=|v_{T,l}|\cdot \|\bu_{T+1}^{\top}\|_q
\leq \|\bv_{T}\|_{\infty}\cdot  d^{1/q}\|\bu_{T+1}^{\top}\|_\infty
\nonumber\\&\leq
4\beta d^{1/q-1/2}\sqrt{\log(2n/\delta)\log(2d/\delta)}~.
\label{eq: Al 2nd term}
\end{align}
Overall, plugging \eqref{eq: Al 1st term} and \eqref{eq: Al 2nd term} into \eqref{eq: Al 2 terms}
and noting that the second summand is dominated by the first (for sufficiently large constant $C>0$), rescaling $\delta$ by a factor of $3$ and slightly simplifying completes the proof.

\subsection{Proof of \propref{prop:nosep}}\label{subsec:proofnosep}

We start by introducing notation that will be used throughout the proof. Let $M_t\in\reals^{n\times \mathrm{dim}(\mathrm{span}(\bv_1,\dots,\bv_{t-1},\bp_1,\dots,\bp_t))}$ be a matrix whose columns are an orthonormal basis for $\mathrm{span}(\bv_1,\dots,\bv_{t-1},\bp_1,\dots,\bp_t)$, and note that $\bv_t=\beta(I-M_t M_t^T)\bxi_t$.

Fix some $\bw\in\reals^d$, and define the auxiliary vector $\br=(r_1,\ldots,r_T)\in\reals^{T}$ as
\begin{equation}\label{eq:rdef}
\forall j\in [T-1]:~r_j=\bu_{j}^\top\bw-\bu_{j+1}^\top\bw~~,~~ r_T = \bu_T^\top \bw~,
\end{equation}
as well as the scalar $r_0=\bu_1^\top \bw$.
By construction, we have
\begin{align}
\sum_{j=1}^{T}(\bv_{j-1}-\bv_j)\bu_j^\top \bw~&=~
\bv_0(\bu_1^\top\bw)-\sum_{j=1}^{T-1}\bv_j(\bu_{j}^\top\bw-\bu_{j+1}^\top\bw)-\bv_T(\bu_T^\top\bw)\notag\\
&=~r_0\bv_0-\sum_{j=1}^{T-1}r_j\bv_j-r_T\bv_T~=~ r_0\bv_0-\sum_{j=1}^{T}r_j\bv_j\notag\\
&=~\alpha r_0\mathbf{1}-\beta\sum_{j=1}^{T}r_j(I-M_j M_j^\top)\bxi_j\notag\\
&=~\alpha r_0\mathbf{1}-\beta\left(\sum_{j=1}^{T}r_j\bxi_j-\sum_{j=1}^{T}r_jM_jM_j^\top\bxi_j\right)~.
\label{eq:valvec}
\end{align}
Our goal will be to show that with high probability over the random choice of the Gaussian random variables $\bxi_1,\ldots,\bxi_T$, \emph{simultaneously} for any $\br$, the vector above contains a non-positive entry. Thus, with high probability, the expression $\sum_{j=1}^{T}(\bv_{j-1}-\bv_j)\bu_j^\top \bw$ will have a non-positive entry simultaneously for any $\bw$, which implies the proposition. For that, we will analyze separately $\sum_{j=1}^{T}r_j\bxi_j$ and $\sum_{j=1}^{T}r_jM_jM_j^\top\bxi_j$, in the following two lemmas.

\begin{lemma}\label{lem:purexi}
    The following holds with probability at least 
    \[
    1-\exp\left(T\log(2n)-\frac{n}{32}\right)
    \]
    over the random choice of $\bxi_1,\ldots,\bxi_T$: For any $\br\in\reals^T$, there exists a subset $\Lcal\subseteq[n]$ such that
    \[
    |\Lcal|\geq \frac{n}{20}~~~\text{and}~~~ \min_{l\in \Lcal}\left(\sum_{j=1}^{T}r_j\bxi_j\right)_{l}\geq \frac{1}{2}\norm{\br}~.
    \]
\end{lemma}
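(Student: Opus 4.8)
The statement to prove is \lemref{lem:purexi}: with high probability over $\bxi_1,\dots,\bxi_T$, simultaneously for every $\br \in \reals^T$ there is a large subset $\Lcal \subseteq [n]$ (of size $\ge n/20$) on which the coordinates of $\sum_{j=1}^T r_j \bxi_j$ are all at least $\frac{1}{2}\|\br\|$. The key observation is that $\sum_{j=1}^T r_j \bxi_j$ is a Gaussian vector in $\reals^n$: conditioned on a fixed direction $\hat\br = \br/\|\br\|$, it equals $\|\br\| \bg$ where $\bg \sim \mathcal{N}(0, I_n)$ (since $\sum_j \hat r_j \bxi_j \sim \mathcal{N}(0, I_n)$ by independence and the fact that $\|\hat\br\|=1$). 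So for a single fixed $\br$, each coordinate $(\sum_j r_j\bxi_j)_l = \|\br\| g_l$ where $g_l \sim \mathcal{N}(0,1)$ i.i.d. Since $\Pr(g_l \ge 1/2)$ is a constant bounded away from zero (roughly $0.3 > 1/5 \cdot \frac{20}{20}$... one should pick the constants so that $\Pr(g_l \geq 1/2) \geq 1/5$, which holds numerically), a Chernoff bound gives that the set $\{l : g_l \ge 1/2\}$ has size $\ge n/20$ except with probability $\exp(-\Omega(n))$.

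**Reducing to a net.** The scaling by $\|\br\|$ shows the event for $\br$ depends only on the direction $\hat\br \in S^{T-1}$, so we need a union bound over the sphere $S^{T-1}$. I would take a $\frac{1}{4}$-net (or some fixed-radius net) $\Ncal$ of $S^{T-1}$; standard volume arguments give $|\Ncal| \le (2n)^T$ or so — more than enough, any bound like $9^T$ suffices, and the $(2n)^T$ form is what appears in the target probability, so I'd use that. Applying the Chernoff bound above to each $\hat\br \in \Ncal$ and union bounding: with probability at least $1 - |\Ncal|\cdot \exp(-cn) \ge 1 - \exp(T\log(2n) - n/32)$ (after fixing constants), simultaneously for every net point $\hat\br$ the set $\Lcal_{\hat\br} = \{l: (\sum_j \hat r_j \bxi_j)_l \ge 1/2\}$ has size $\ge n/20$. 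Note the target probability in the lemma is exactly of this form, so the constants $1/20$, $1/2$, $1/32$ are presumably tuned to make this work — I would not fight them, just verify $\Pr(g \ge 1/2)$ times a Chernoff exponent lands below $n/32$ with the net size $\le(2n)^T$.

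**Transferring from the net to all directions.** The remaining (and main) issue is to pass from net points to arbitrary $\hat\br \in S^{T-1}$. Given arbitrary $\hat\br$, pick the nearest net point $\hat\br'$ with $\|\hat\br - \hat\br'\| \le 1/4$. On the good set $\Lcal_{\hat\br'}$ (size $\ge n/20$) we have $(\sum_j \hat r'_j \bxi_j)_l \ge 1/2$; I want to conclude $(\sum_j \hat r_j \bxi_j)_l \ge 1/2 - (\text{small})$ on a still-large subset. The perturbation is $(\sum_j (\hat r_j - \hat r'_j)\bxi_j)_l$, whose $\ell_2$-in-$l$ norm is at most $\|\sum_j(\hat r_j - \hat r'_j)\bxi_j\| $; but the cleanest route is: on the good event we should also control $\|\sum_j a_j \bxi_j\|_\infty$ for all unit $a$, or better, note that for the \emph{difference} direction $\bd = (\hat\br - \hat\br')/\|\hat\br - \hat\br'\|$ (a unit vector), we cannot directly union-bound since $\bd$ ranges over infinitely many directions too. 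The standard fix is an iterative/telescoping net argument: write $\hat\br = \sum_{k\ge 0} \theta_k \hat\br^{(k)}$ with $\hat\br^{(k)} \in \Ncal$ and $\theta_k \le 4^{-k}$, and correspondingly $\sum_j \hat r_j \bxi_j = \sum_k \theta_k (\sum_j \hat r^{(k)}_j \bxi_j)$. On the good event, instead of just a lower bound on a subset for each net point, I should establish a two-sided control: for every net point, $\|\sum_j \hat r^{(k)}_j \bxi_j\|_\infty \le C\sqrt{\log n + T}$ (another Gaussian max bound, folded into the same union bound), so the tail $\sum_{k\ge 1}$ contributes at most $\frac14 \cdot C\sqrt{\log n + T}/(1 - 1/4)$ in $\ell_\infty$. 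This is still too big unless — hmm. So actually the right approach is likely simpler: since scaling only the net for the \emph{lower bound part} is what matters and the gap $1/2$ is generous, I'd use a fine enough net (radius $\epsilon_0$ small, $|\Ncal| \le (C/\epsilon_0)^T$) and on the good event also control, for each net point, $|\{l : (\sum_j \hat r'_j\bxi_j)_l \ge 1/2\}| \ge n/20$ \emph{with margin}, i.e., $\ge 3/4$ rather than $1/2$, then the perturbation $\le \epsilon_0 \cdot \|\bg'\|$-type term on most coordinates... The honest summary: the main obstacle is exactly this net-to-sphere transfer for a \emph{lower bound on a coordinate subset} (as opposed to an upper bound on a norm), which needs either a telescoping argument with simultaneous $\ell_\infty$ control of all net-point Gaussians, or a careful two-level net; I would set it up by including in the union-bounded good event both (i) the size-$\ge n/20$ lower-bound event for each net point and (ii) a uniform $\ell_\infty \le O(\sqrt{\log n + T})$ bound for each net point, then take the net radius $\epsilon_0 \asymp 1/\sqrt{\log n + T}$ so that the perturbation is at most a constant $\le 1/4$, absorbing it into the slack between the target threshold $1/2$ and a slightly larger threshold used at the net level. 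All constants would then be chosen at the end to match the probability $1 - \exp(T\log(2n) - n/32)$ claimed.
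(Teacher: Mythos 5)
Your overall architecture matches the paper's: reduce to unit $\br$, note that for a fixed unit direction each coordinate of $\Xi\br$ (where $\Xi$ has columns $\bxi_1,\ldots,\bxi_T$) is a standard Gaussian, apply a multiplicative Chernoff bound to get $\ge n/20$ large coordinates for that fixed direction, union-bound over an $\epsilon$-net, and transfer to arbitrary directions by controlling the perturbation. The gap is precisely the step you yourself flag as ``the main obstacle'': your good event (ii) --- an $\ell_\infty$ bound on $\Xi\hat\br'$ \emph{at net points only} --- does not control the perturbation $\norm{\Xi(\hat\br-\hat\br')}_\infty$, since the difference direction $\hat\br-\hat\br'$ is not (proportional to) a net point. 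Making that route work requires carrying out the telescoping/successive-approximation decomposition you sketch and then abandon (``This is still too big unless --- hmm''), so as written the argument does not close.

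The paper's resolution is the one clean idea you are missing, and it is simpler than chaining: bound the perturbation coordinatewise by Cauchy--Schwarz against the \emph{rows} of $\Xi$, namely $|(\Xi(\br-\bs))_l| = |\Xi_l(\br-\bs)| \le \norm{\Xi_l}\cdot\norm{\br-\bs}$. The quantity $\max_{l\in[n]}\norm{\Xi_l}$ is exactly the operator norm from $\ell_2^T$ to $\ell_\infty^n$, so it controls the perturbation for \emph{all} difference directions simultaneously, and it is bounded by a single union bound over the $n$ rows (each a standard Gaussian in $\reals^T$), with no reference to the net whatsoever. Concretely, the paper takes $\epsilon=2/n$, shows $\max_l\norm{\Xi_l}\le 1/(2\epsilon)$ with probability $1-2n\exp(-n^2/(32T))$, so the perturbation is at most $1/2$; using threshold $1$ at net points (where $\Pr(g\ge 1)\ge 0.15$ feeds the Chernoff bound) then yields the threshold $\frac{1}{2}\norm{\br}$ of the lemma for all directions. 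Your instinct to leave slack between the net-point threshold and the final one is correct, but you need either to complete the chaining argument explicitly (workable, but it changes the net size and adds a layer) or to switch to the row-norm bound, which is essentially one line.
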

\begin{proof} 
    Let $\Xi\in\reals^{n\times T}$ be the matrix whose $j$-th column is $\bxi_j$, so that $\sum_{j=1}^{T}r_j\bxi_j=\Xi\br$. Note that $\Xi$ is composed of $n\times T$ independent standard Gaussian entries. 

    Without loss of generality, it is enough to prove the bound for any unit $\br$: Namely that with high probability, for any unit $\br$,
    \begin{equation}\label{eq:runit}
    \left(\Xi \br\right)_{l}~\geq~\frac{1}{2}~~~\text{for at least $\frac{n}{20}$ indices $l$}
    \end{equation}
     (the result for all $\br\in\reals^T$ follows immediately by scaling $\br$). First, let us fix a unit $\br$ and some index $l\in [n]$, and note that $(\Xi \br)_{l}$ has a standard univariate Gaussian distribution. A standard fact about this distribution is that the probability of getting more than $1$ (namely, more than one standard deviation from the mean) is at least $0.1587...\geq 0.15$. Therefore, for any fixed $\br,l$, 
    \[
    \E\left[\mathbf{1}_{(\Xi\br)_{l}\geq 1}\right]~=~\Pr((\Xi \br)_{l}\geq 1)> 0.15~,
    \]
    where $\mathbf{1}_E$ is the indicator function for the event $E$. 
    Noting that $\left\{\mathbf{1}_{(\Xi\br)_{l}\geq 1}\right\}_{l=1}^{n}$ are independent random variables (since the rows of $\Xi$ are independent), it follows by a standard multiplicative Chernoff bound that
\begin{align*}
    \Pr\left(\sum_{l=1}^{n}\mathbf{1}_{(\Xi\br)_{l}\geq 1}\leq \frac{n}{20}\right)
&=\Pr\left(\sum_{l=1}^{n}\mathbf{1}_{(\Xi\br)_{l}\geq 1}\leq 0.15n\cdot \frac{1}{3}\right)~\\
&\leq\exp\left(-\frac{2}{9}\cdot 0.15n\right)=\exp\left(-\frac{n}{30}\right)~.
\end{align*}
    This bound is true for any \emph{fixed} unit $\br$. In order to extend the bound simultaneously for all unit $\br$, we will use a standard $\epsilon$-net argument: Fix some $\epsilon>0$, and let $\Ncal_{\epsilon}$ be an $\epsilon$-net of the unit Euclidean ball in $\reals^T$, of size $\leq \left(\frac{2}{\epsilon}+1\right)^T$ (namely, a collection of unit vectors so that each $\br$ in the unit ball is $\epsilon$-close in Euclidean distance to one of the vectors). The existence of an $\epsilon$-net of this size is shown, for example, in \cite[Lemma 5.2]{vershynin2012introduction}. Using a union bound and the displayed equation above, it follows that with probability at least $1-\left(\frac{2}{\epsilon}+1\right)^T\exp\left(-\frac{n}{30}\right)$,
    \begin{equation}\label{eq:netgood}
    \forall \br\in\Ncal_{\epsilon}~,~\sum_{l=1}^{n}\mathbf{1}_{(\Xi\br)_{l}\geq 1}> \frac{n}{20}~.
    \end{equation}
    Assuming this event holds, let $\bs$ be any other unit vector in $\reals^T$, such that $\norm{\br-\bs}\leq \epsilon$ for some $\br\in\Ncal_{\epsilon}$. Then letting $\Xi_{l}$ be the $l$-th row of $\Xi$, we have
    \begin{equation}\label{eq:net}
    \max_{l\in[n]}\left|\left(\Xi\br-\Xi\bs\right)_{l}\right|~=~
    \max_{l\in [n]}\left|\Xi_{l}(\br-\bs)\right|~\leq~
    \max_{l\in[n]}\norm{\Xi_{l}}\cdot \norm{\br-\bs}
    ~\leq~ \epsilon \cdot \max_{l\in[m]}\norm{\Xi_{l}}~.
    \end{equation}
    Since $\Xi_{l}$ is a standard Gaussian random vector in $\reals^T$, it follows by a standard tail bound for the norm of such vectors that
    \[
    \forall l\in [n] \Pr\left(\norm{\Xi_{l}}> \frac{1}{2\epsilon}\right)~\leq~ 2\exp\left(-\frac{(1/2\epsilon)^2}{2T}\right)~=~2\exp\left(-\frac{1}{8\epsilon^2 T}\right)~,
    \]
    hence by a union bound,
    \[
    \Pr\left(\max_{l\in [n]}\norm{\Xi_{l}}> \frac{1}{2\epsilon}\right)~\leq~ 2n\exp\left(-\frac{1}{8\epsilon^2 T}\right)~.
    \]
    Combining this with \eqref{eq:net}, it follows that with probability at least $1-2n\exp\left(-\frac{1}{8\epsilon^2 T}\right)$, it holds simultaneously for any $\epsilon$-close unit vectors $\br,\bs$ that
    \[
    \max_{l\in[n]}\left|\left(\Xi\br-\Xi\bs\right)_{l}\right|\leq\frac{1}{2}~.
    \]
    Combining the above with \eqref{eq:netgood} using a union bound, it follows that with probability at least 
    \[
    1-\left(\frac{2}{\epsilon}+1\right)^T\exp\left(-\frac{n}{30}\right)-2n\exp\left(-\frac{1}{8\epsilon^2 T}\right)~,
    \]
    it holds simultaneously for all unit vectors $\bs$ in $\reals^T$ that
    \[
    \sum_{l=1}^{n}\mathbf{1}_{(\Xi\bs)_{l}\geq \frac{1}{2}}> \frac{n}{20}~.
    \]
    In particular, choosing $\epsilon=\frac{2}{n}$, we get that the probability of this event not occuring is at most
    \[
        (n+1)^T\exp\left(-\frac{n}{30}\right)+2n\exp\left(-\frac{n^2}{32T}\right)~.
    \]
    Assuming $n\geq T$ (without loss of generality, since otherwise the probability lower bound in the lemma statement is less than $0$),  this can be loosely upper bounded by
    \[
    (n+1)^{T}\exp\left(-\frac{n}{32}\right)+2n\exp\left(-\frac{n}{32}\right)~=~\left((n+1)^{T}+2n\right)\exp\left(-\frac{n}{32}\right)
    ~\leq~ 2(2n)^T\exp\left(-\frac{n}{32}\right)~,
    \]
    which equals $2\exp(T\log(2n)-\frac{n}{32})$. This establishes \eqref{eq:runit} for any unit $\br$ as required. 
\end{proof}

\begin{lemma}\label{lem:remainder}
    For any $\delta\in (0,1)$, it holds with probability at least $1-\delta$ over $\bxi_1,\ldots,\bxi_T$ that 
    \[
    \forall \br\in \reals^T,~~\left\|\sum_{j=1}^{T}r_j M_j M_j^\top \bxi_j\right\|^2~\leq~ 4\norm{\br}^2T^2\log\left(\frac{2T}{\delta}\right)~.
    \]
\end{lemma}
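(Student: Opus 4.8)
\textbf{Proof plan for \lemref{lem:remainder}.}

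The plan is to bound the norm of $\sum_{j=1}^T r_j M_j M_j^\top \bxi_j$ by controlling each summand $M_j M_j^\top \bxi_j$ separately and then applying the triangle inequality together with Cauchy--Schwarz. The key observation is that $M_j \in \reals^{n\times s_j}$ has orthonormal columns spanning a subspace of dimension $s_j \leq 2j \leq 2T$, and crucially, $M_j$ is measurable with respect to $\bxi_1,\ldots,\bxi_{j-1}$ (since it is built from $\bv_0,\ldots,\bv_{j-1}$ and the algorithm's queries $\bp_1,\ldots,\bp_j$, which in turn depend only on earlier oracle responses), hence is independent of $\bxi_j$. Therefore, conditionally on $\bxi_1,\ldots,\bxi_{j-1}$, the vector $M_j^\top \bxi_j \in \reals^{s_j}$ is a standard Gaussian in $\reals^{s_j}$, and $\norm{M_j M_j^\top \bxi_j} = \norm{M_j^\top \bxi_j}$ since $M_j$ has orthonormal columns.

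First I would invoke the standard Gaussian norm tail bound: for a standard Gaussian vector in $\reals^{s}$ with $s\leq 2T$, its norm exceeds $\sqrt{s} + t$ with probability at most $e^{-t^2/2}$, or more crudely, $\norm{M_j^\top\bxi_j}^2 \leq 4T\log(2T/\delta)$ with conditional probability at least $1-\delta/T$ for a suitable choice (absorbing the $\sqrt{s}\leq\sqrt{2T}$ term into the log term, possibly adjusting constants — one can use e.g. $\Pr(\norm{\bg}^2 \geq 2s + 3\log(1/\delta')) \leq \delta'$ type bounds, and then $2s + 3\log(T/\delta) \leq 4T\log(2T/\delta)$ for the relevant ranges). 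Since this holds for every realization of $\bxi_1,\ldots,\bxi_{j-1}$, it holds unconditionally, and a union bound over $j\in[T]$ gives that with probability at least $1-\delta$, simultaneously for all $j$, $\norm{M_j M_j^\top\bxi_j}^2 \leq 4T\log(2T/\delta)$.

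On this event, for any $\br\in\reals^T$ I would write
\[
\left\|\sum_{j=1}^T r_j M_j M_j^\top \bxi_j\right\| \leq \sum_{j=1}^T |r_j|\cdot\norm{M_j M_j^\top\bxi_j} \leq \sqrt{4T\log(2T/\delta)}\sum_{j=1}^T |r_j| \leq \sqrt{4T\log(2T/\delta)}\cdot\sqrt{T}\norm{\br}~,
\]
using Cauchy--Schwarz in the last step. Squaring gives $\norm{\sum_j r_j M_j M_j^\top\bxi_j}^2 \leq 4T^2\log(2T/\delta)\norm{\br}^2$, which is exactly the claimed bound, and it holds simultaneously for all $\br$ since the event was defined purely in terms of the $\bxi_j$'s (the bound on each $\norm{M_j M_j^\top\bxi_j}$ is uniform in $\br$). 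The main thing to be careful about — and the only real subtlety — is the independence/measurability argument: one must verify that $M_j$ indeed depends only on $\bxi_1,\ldots,\bxi_{j-1}$ and not on $\bxi_j$, which follows from the iterative structure of Construction \ref{assump:Aconst} (the algorithm's queries $\bp_j,\bw_j$ at iteration $j$ are functions of oracle responses through iteration $j-1$, which involve $\bv_1,\ldots,\bv_{j-1}$ and hence only $\bxi_1,\ldots,\bxi_{j-1}$). Everything else is a routine application of Gaussian concentration and elementary inequalities.
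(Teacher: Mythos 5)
Your proof is correct and takes essentially the same route as the paper's: triangle inequality, Cauchy--Schwarz to pull out $\norm{\br}$, a per-index Gaussian concentration bound on $\norm{M_j M_j^\top \bxi_j}$ (using independence of $M_j$ from $\bxi_j$), and a union bound over $j\in[T]$. The only real differences are technical choices: you bound each $\norm{M_j^\top\bxi_j}^2$ uniformly via $s_j\leq 2T$ and a Laurent--Massart chi-squared tail, then use $\sum_j|r_j|\leq\sqrt{T}\norm{\br}$; the paper instead keeps the per-$j$ dependence $\operatorname{Tr}(M_jM_j^\top)\leq 2j$, applies a sub-Gaussian norm tail, and uses $\bigl(\sum_j|r_j|b_j\bigr)^2\leq\norm{\br}^2\sum_j b_j^2$ followed by $\sum_j 4j\leq 4T^2$. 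Both routes give exactly the same $4T^2\log(2T/\delta)$ factor, and your identification of the measurability/independence point is the right subtlety to flag. The one thing you should actually verify rather than hand-wave is the inequality $2s_j+3\log(T/\delta)\leq 4T\log(2T/\delta)$: writing the difference as $4T(\log 2-1)+(4T-3)\log(T/\delta)$, it holds for all $T\geq 2$ and $\delta\in(0,1)$ but narrowly fails for $T=1$ and $\delta$ close to $1$; since the lemma is only invoked for $T$ larger than a universal constant, this is harmless, but it is worth noting rather than leaving as ``for the relevant ranges.''
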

\begin{proof}
    It is enough to prove that with probability at least $1-\delta$,
    \begin{equation}\label{eq:bnd1}
    \forall \br\in \reals^d:\norm{\br}=1,~~\left\|\sum_{j=1}^{T}r_j M_j M_j^\top \bxi_j\right\|^2~\leq~ 4T^2\log\left(\frac{2T}{\delta}\right)~
    \end{equation}
    (the result for all $\br$ then follow by scaling). 

    By the triangle inequality and Cauchy-Schwartz, we have for any unit vector $\br$ 
    \begin{align}
    \left\|\sum_{j=1}^{T}r_j M_j M_j^\top \bxi_j\right\|^2
    ~&\leq~ \left(\sum_{j=1}^{T}|r_j|\norm{M_j M_j^\top \bxi_j}\right)^2~\leq~
    \left(\sum_{j=1}^{T}r_j^2\right)\cdot \left(\sum_{j=1}^{T}\norm{M_j M_j^\top \bxi_j}^2\right)\notag\\
    &=~ \sum_{j=1}^{T}\norm{M_j M_j^\top \bxi_j}^2~.\label{eq:bnd2}
    \end{align}
    Since $\bxi_j$ is a standard Gaussian random vector (with zero mean and identity covariance), each $M_j M_j^\top \bxi_j$ (conditioned on $\bxi_1,\ldots,\bxi_{j-1}$ which in turn fixes $M_j$) is also Gaussian, with zero mean and covariance $(M_j M_j^\top)^2=M_j M_j^\top$. Therefore, by a standard tail bound for Gaussian random vectors,
    \begin{align*}
    \Pr\left(\norm{M_j M_j^\top \bxi_j}\geq z~|~\bxi_1,\ldots,\bxi_{j-1}\right)
    &\leq 2\exp\left(-\frac{z^2}{2 \text{Tr}(M_j M_j^\top)}\right)
    =2\exp\left(-\frac{z^2}{2\|M_j\|_F^2}\right)
    \\&\leq 2\exp\left(-\frac{z^2}{4j}\right)~,
    \end{align*}
    where $\|\cdot\|_F$ denotes Frobenius norm, and where the last step follows from the fact that $M_j$ is a matrix composed of at most $2j$ orthonormal rows. Therefore, for any $\delta'\in (0,1)$, it holds with probability at least $1-\delta'$ over $\bxi_t$ that
    \[
    \norm{M_j M_j^\top \bxi_j}\leq \sqrt{4j\log(2/\delta')}~.
    \]
    Letting $\delta'=\delta/T$ and using a union bound over all $j\in [T]$, we get that with probability at least $1-\delta$, \eqref{eq:bnd2} is at most
    \[
    \sum_{j=1}^{T}4j\log(2T/\delta)~\leq ~\sum_{j=1}^{T}4T\log(2T/\delta)
    ~=~ 4T^2\log(2T/\delta)~,
    \]
    which leads to \eqref{eq:bnd1} as required.
    \end{proof}

Combining \lemref{lem:purexi} and \lemref{lem:remainder} with a union bound, and applying them to \eqref{eq:valvec}, we get the following: With probability at least $1-\delta-\exp\left(T\log(2n)-\frac{n}{32}\right)$ over $\bxi_1,\ldots,\bxi_T$, we have that for any $\bw$, there is some subset $\Lcal\subseteq [n]$ of size $|\Lcal|\geq \frac{n}{20}$ for which the following inequalities hold:
\begin{align}
\min_{l\in[n]}\left(\sum_{j=1}^{T}(\bv_{j-1}-\bv_j)\bu_j^\top \bw\right)_{l}~&=~
\min_{l\in[n]}
\left(\alpha r_0\mathbf{1}-\beta\left(\sum_{j=1}^{T}r_j\bxi_j-\sum_{j=1}^{T}r_jM_jM_j^\top\bxi_j\right)\right)_{l}\notag\\
&\leq~
\min_{l\in\Lcal}
\left(\alpha r_0-\beta\left(\sum_{j=1}^{T}r_j\bxi_j\right)_{l}+\beta\left(\sum_{j=1}^{T}r_jM_jM_j^\top\bxi_j\right)_{l}\right)\notag\\
&\stackrel{(1)}{\leq}~
\min_{l\in\Lcal}
\left(\alpha r_0-\frac{\beta}{2}\norm{\br}+\beta\left(\sum_{j=1}^{T}r_jM_jM_j^\top\bxi_j\right)_{l}\right)\notag\\
&\stackrel{(2)}{\leq}~
\alpha r_0-\frac{\beta}{2}\norm{\br}+\beta \sqrt{\frac{1}{|\Lcal|}\sum_{l\in \Lcal}\left(\sum_{j=1}^{T}r_jM_jM_j^\top\bxi_j\right)_{l}^2}\notag\\
&\leq~
\alpha r_0-\frac{\beta}{2}\norm{\br}+\beta \sqrt{\frac{1}{|\Lcal|}\sum_{l=1}^{n}\left(\sum_{j=1}^{T}r_jM_jM_j^\top\bxi_j\right)_{l}^2}\notag\\
&\stackrel{(3)}{\leq}~\alpha r_0-\frac{\beta}{2}\norm{\br}+\beta \sqrt{\frac{20}{n}\cdot4\norm{\br}^2 T^2\log\left(\frac{2T}{\delta}\right)}\notag\\
&=~\alpha r_0-\beta\left(\frac{1}{2}-T\sqrt{\frac{80\log(2T/\delta)}{n}}\right)\norm{\br}\notag\\
&\stackrel{(4)}{\leq}~\alpha r_0-\frac{\beta}{4}\norm{\br}\notag\\
&\stackrel{(5)}{=}~ \frac{\beta}{4}\left(\frac{4\alpha}{\beta}\bu_1^\top\bw-\sqrt{\sum_{j=1}^{T-1}(\bu_{j}^\top\bw-\bu_{j+1}^\top\bw)^2+(\bu_T^\top \bw)^2}\right)~,\label{eq:finalexp}
\end{align}
where $(1)$ is by \lemref{lem:purexi}, $(2)$ is by the fact that a minimum can be upper bounded by an average, $(3)$ is by \lemref{lem:remainder} and the fact that $|\Lcal|\geq \frac{n}{20}$, $(4)$ is by the assumption in the proposition statement, and $(5)$ is by definition of the vector $\br$ and scalar $r_0$ in \eqref{eq:rdef}.

We now wish to argue that the expression in \eqref{eq:finalexp} is necessarily non-positive, which would imply overall that 
\[
\min_{\bp\in\Delta^{n-1}}\left(\sum_{j=1}^{T}(\bv_{j-1}-\bv_j)\bu_j^\top \bw\right)~=~\min_{l\in[n]}\left(\sum_{j=1}^{T}(\bv_{j-1}-\bv_j)\bu_j^\top \bw\right)_{l}~\leq~0~.
\]
Since $\bw$ is arbitrary, and the probabilistic statements above hold with high probability simultaneously for any vectors $\bw,\br$, it follows that with this high probability, 
\[
\sup_{\bw\in\reals^d}\min_{\bp\in\Delta^{n-1}}\left(\sum_{j=1}^{T}(\bv_{j-1}-\bv_j)\bu_j^\top \bw\right)~=~\min_{l\in[n]}\left(\sum_{j=1}^{T}(\bv_{j-1}-\bv_j)\bu_j^\top \bw\right)_{l}~\leq~0~,
\]
hence proving the proposition. 

Indeed, the fact that \eqref{eq:finalexp} is non-positive follows from the proposition assumption that $\frac{4\alpha}{\beta}\leq \frac{1}{\sqrt{T}}$, and the following lemma:
\begin{lemma}
    For any integer $T>1$ and $\delta\in \left(0,\frac{1}{\sqrt{T}}\right]$, it holds that
    \[
    \sup_{\bx\in\reals^T} \delta x_1-\sqrt{\sum_{j=1}^{T-1}(x_{j+1}-x_j)^2+x_T^2}~\leq~0~.
    \]
\end{lemma}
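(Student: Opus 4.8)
The plan is to reduce the claim to a single application of the Cauchy–Schwarz inequality via a telescoping identity. Fix $\bx\in\reals^T$. The starting observation is that $x_1$ can be written as a telescoping sum of the increments appearing under the square root:
\[
x_1~=~\sum_{j=1}^{T-1}(x_j-x_{j+1})+x_T~.
\]
This writes $x_1$ as a sum of exactly $T$ real numbers, namely the $T-1$ consecutive differences $x_j-x_{j+1}$ together with the single term $x_T$.

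Next I would apply Cauchy–Schwarz against the all-ones vector in $\reals^T$: since $x_1$ is the inner product of $(x_1-x_2,\ldots,x_{T-1}-x_T,x_T)$ with $\mathbf 1\in\reals^T$, we get
\[
x_1~\leq~\sqrt{T}\cdot\sqrt{\sum_{j=1}^{T-1}(x_j-x_{j+1})^2+x_T^2}~=~\sqrt{T}\cdot\sqrt{\sum_{j=1}^{T-1}(x_{j+1}-x_j)^2+x_T^2}~,
\]
where in the last step I used $(x_j-x_{j+1})^2=(x_{j+1}-x_j)^2$ to match the expression in the statement. Multiplying through by $\delta\leq \frac{1}{\sqrt{T}}$ and using that the square root is nonnegative yields
\[
\delta x_1~\leq~\delta\sqrt{T}\cdot\sqrt{\sum_{j=1}^{T-1}(x_{j+1}-x_j)^2+x_T^2}~\leq~\sqrt{\sum_{j=1}^{T-1}(x_{j+1}-x_j)^2+x_T^2}~,
\]
which is exactly the desired inequality. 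Taking the supremum over $\bx\in\reals^T$ completes the proof.

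\textbf{On difficulty.} There is no real obstacle here — the only thing to get right is the bookkeeping of the telescoping decomposition (it has $T$ summands, not $T-1$, because the ``leftover'' term $x_T$ must be included), since this is precisely what makes the constant $\sqrt{T}$ appear and hence why the threshold $\delta\leq 1/\sqrt{T}$ is exactly the right one. One could additionally remark that the bound is tight: taking $x_j = T+1-j$ makes all increments equal to $1$ and $x_T=1$, so the ratio $x_1/\sqrt{\sum_j(x_{j+1}-x_j)^2+x_T^2}$ equals $\sqrt T$, showing the threshold on $\delta$ cannot be improved.
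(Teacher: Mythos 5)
Your proof is correct. The key inequality is the same one the paper uses — writing $x_1$ as the telescoping sum $\sum_{j=1}^{T-1}(x_j-x_{j+1})+x_T$ of exactly $T$ terms and comparing the $\ell_1$ and $\ell_2$ norms of that vector of increments — but you apply it directly via Cauchy–Schwarz against the all-ones vector, whereas the paper packages the same estimate inside a contradiction argument: it writes the expression as $\delta\,\be_1^\top\bx-\sqrt{\bx^\top M\bx}$ for an explicit tridiagonal matrix $M$ and shows that $M-\delta^2\be_1\be_1^\top$ is positive semidefinite (the PSD verification being precisely your telescoping-plus-norm-comparison step). Your route is shorter and avoids the quadratic-form detour entirely; the paper's formulation makes explicit the structural fact that the threshold $\delta=1/\sqrt{T}$ is exactly where the rank-one perturbation of $M$ stops being PSD, which your tightness example $x_j=T+1-j$ establishes just as well.
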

\begin{proof}
    Let $M\in\reals^{T\times T}$ be the symmetric matrix defined as
    \[
    M~:=~\begin{pmatrix}
1 & -1 & 0 & 0 & 0 &\cdots & 0\\
-1 & 2 & -1 & 0 & 0 &\cdots & 0\\
0 & -1 & 2 & -1 & 0 & \cdots & 0\\
\vdots &\vdots &\vdots &\vdots &\vdots &\vdots &\vdots\\
0 &\cdots &\cdots &0 &-1&2&-1\\
0 &\cdots &\cdots &0 &0&-1&2\\
\end{pmatrix}~.
    \]
    It is easily verified that the expression in the lemma statement equals $\delta \be_1^\top \bx-\sqrt{\bx^\top M \bx}$, where $\be_1$ is the first standard basis vector. Moreover, $M$ is positive semidefinite, as by definition $\bx^\top M\bx = \sum_{j=1}^{T-1}(x_{j+1}-x_j)^2+x_T^2\geq 0$ for all $\bx$. 
    
    Suppose by contradiction that the lemma does not hold, namely there exists some $\bx\in\reals^T$ such that 
    \begin{equation}\label{eq:suppos}
        \delta \be_1^\top \bx-\sqrt{\bx^\top M \bx}~>~0~.     
    \end{equation}
    Since $\sqrt{\bx^\top M\bx}\geq 0$, it follows that $\delta\be_1^\top\bx>0$, and therefore $\delta \be_1^\top \bx+\sqrt{\bx^\top M \bx}> 0$. As a result,
    \begin{align*}
        0~&<~ \left(\delta \be_1^\top \bx-\sqrt{\bx^\top M \bx}\right)
        \left(\delta \be_1^\top \bx+\sqrt{\bx^\top M \bx}\right)\\
        &=~ \delta^2(\be_1^\top \bx)^2-\bx^\top M \bx
        ~=~ \bx^\top\left(\delta^2 \be_1\be_1^\top-M\right)\bx~.
    \end{align*}
    In other words, we get that $\bx^\top (M-\delta^2\be_1\be_1^\top)\bx<0$, and therefore $M-\delta^2\be_1\be_1^\top$ is \emph{not} a positive semidefinite matrix. However, we will now show that $M-\delta^2\be_1\be_1^\top$ is positive semidefinite, which leads to a contradiction, hence \eqref{eq:suppos} cannot hold, and thus proving the lemma. Indeed, for any vector $\by$, we have
    \begin{align*}
        \by^\top (M-\delta^2\be_1\be_1^\top)\by&=~-\delta^2 y_1^2+\sum_{j=1}^{T-1}(y_j-y_{j+1})^2+y_T^2\\
        &\stackrel{(1)}{\geq}~ -\delta^2 y_1^2+\frac{1}{T}\left(\sum_{j=1}^{T-1}|y_j-y_{j+1}|+|y_T|\right)^2\\
        &\stackrel{(2)}{\geq}~-\delta^2 y_1^2+\frac{1}{T}\left(\sum_{j=1}^{T-1}(y_j-y_{j+1})+y_T\right)^2\\
        &=~-\delta^2 y_1^2+\frac{1}{T}y_1^2~=~
        \left(-\delta^2+\frac{1}{T}\right)y_1^2\\
        &\stackrel{(3)}{\geq}~0~,
    \end{align*}
    where $(1)$ uses the fact that $\norm{\bz}_2\geq \frac{1}{\sqrt{T}}\norm{\bz}_1$ for any $\bz\in \reals^{T}$, $(2)$ uses the triangle inequality, and $(3)$ is by the assumption that $\delta\in \left(0,\frac{1}{\sqrt{T-1}}\right)$. Therefore, $M-\delta^2\be_1\be_1^\top$ is a positive semidefinite matrix, which as explained above proves the lemma.
\end{proof}

\subsection{Proof of Proposition \ref{prop:l1tosimp}}
    Given any matrix $A\in [-1,+1]^{n\times d}$, define the operator 
    \[
    \psi(A):=(A;-A)\in \reals^{n\times 2d}~.
    \]
    Somewhat abusing notation, we will also define $\psi$ as a mapping from the unit $\ell_1$ ball in $\reals^d$ to $\Delta^{2d-1}$ as follows:
    \[
    \psi(\bw) = \sum_{j=1}^{d}|w_j|\left(\mathbf{1}_{w_j>0}\cdot\be_j + \mathbf{1}_{w_j\leq 0}\cdot\be_{j+d}\right)~\in~\reals^{2d}~,
    \]
    where $\mathbf{1}_E$ is the indicator function for the event $E$.
    In words, $\psi(\bw)$ is the $2d$-dimensional vector where the first $d$ entries correspond to the positive entries in $\bw$, and the last $d$ entries correspond to the absolute values of the non-positive ones. We will also define the inverse operator $\psi^{-1}$ from $\Delta^{2d-1}$ to the unit $\ell_1$ ball in $\reals^d$ as
    \[
    \psi^{-1}(\bw):=\sum_{j=1}^{d}(w_{j}-w_{d+j})\be_j\in\reals^d~.
    \]
    This is indeed a mapping to the unit $\ell_1$ ball, since for any $\bw\in\Delta^{2d-1}$,
    \[\norm{\psi^{-1}(\bw)}_1~=~\sum_{j=1}^{d}|w_{j}-w_{d+j}|~\leq~\sum_{j=1}^{d}\left(|w_{j}|+|w_{d+j}|\right)~=~1~.
    \]
    Moreover, it is easily verified that $A\bw=\psi(A)\psi(\bw)$ and   $A\psi^{-1}(\bw)=\psi(A)\bw$. 
    
    Now, given the algorithm $\Acal$ for the simplex domain and a matrix $A\in\reals^{n\times d}$, consider the following algorithm for the $\ell_1$ domain: Run $\Acal$ on the matrix $\psi(A)\in[-1,+1]^{n\times 2d}$ for $T$ iterations, resulting in the vector $\bw_{T+1}\in \Delta^{2d-1}$, and return the vector $\psi^{-1}(\bw_{T+1})\in\reals^{d}$ (which has $\ell_1$ norm at most $1$ by the above). Letting $\bw^*$ be some optimal solution of $\max_{\bw\in\reals^d:\norm{\bw}_1\leq 1}\min_{\bp\in\Delta^{n-1}}\bp^\top A\bw$, we have
    \begin{align*}
    &\left(\max_{\bw\in\reals^d:\norm{\bw}_1\leq 1}\min_{\bp\in\Delta^{n-1}}\bp^\top A\bw\right)-\left(\min_{\bp\in\Delta^{n-1}}\bp^\top A\psi^{-1}(\bw_{T+1})\right)\\
    &=~
\left(\min_{\bp\in\Delta^{n-1}}\bp^\top A\bw^*\right)-\left(\min_{\bp\in\Delta^{n-1}}\bp^\top A\psi^{-1}(\bw_{T+1})\right)\\
&=~
\left(\min_{\bp\in\Delta^{n-1}}\bp^\top \psi(A)\psi(\bw^*)\right)-\left(\min_{\bp\in\Delta^{n-1}}\bp^\top \psi(A)\bw_{T+1}\right)\\
&\leq~
\left(\max_{\bw\in\Delta^{2d-1}}\min_{\bp\in\Delta^{n-1}}\bp^\top \psi(A)\bw\right)-\left(\min_{\bp\in\Delta^{n-1}}\bp^\top \psi(A)\bw_{T+1}\right)
~\leq~ \epsilon(T,n,d)
\end{align*}
as required, where the last inequality is by the guarantee on the algorithm $\Acal$.

\subsection{Auxiliary lemmas}

\begin{lemma}\label{lem:rescaledGaussian}
    Let $\bxi\sim \Ncal(\mathbf{0},I_q)$ be a standard Gaussian random variable,
    and define the vector $\bx=\beta(I-MM^\top)\bxi$ where $M$ is some matrix composed of orthonormal columns, and $\beta>0$. Then for all $z>0$,
    \[
    \Pr(\norm{\bx}_{\infty}\geq z)~\leq~ 2q\exp\left(-\frac{z^2}{2\beta^2}\right)~.
    \]
    Moreover, if the number of columns in $M$ is less than $q/2$, then it also holds that
    \[
    \Pr\left(\frac{\norm{\bx}_{\infty}}{\norm{\bx}_2}\geq \frac{z}{\sqrt{q}}\right)~\leq~ 2q\exp\left(-\frac{z^2}{8}\right)+\exp\left(-\frac{q}{48}\right)~.    
    \]
\end{lemma}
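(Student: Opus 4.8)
The plan is to treat the two displayed bounds separately, each reducing to elementary Gaussian facts. Write $P:=I-MM^\top$; since $M$ has orthonormal columns, $P$ is the orthogonal projection onto the orthogonal complement of the column space of $M$, hence symmetric, idempotent, and non-expansive (its eigenvalues lie in $\{0,1\}$). For the first bound, note that for each coordinate $i\in[q]$, $x_i=\beta\,\be_i^\top P\bxi=\beta\,(P\be_i)^\top\bxi$ is a centered univariate Gaussian with variance $\beta^2\norm{P\be_i}^2\le\beta^2\norm{\be_i}^2=\beta^2$. The standard one-dimensional Gaussian tail bound then gives $\Pr(|x_i|\ge z)\le 2\exp(-z^2/(2\beta^2))$ for every $z>0$, and a union bound over $i\in[q]$ yields $\Pr(\norm{\bx}_\infty\ge z)\le 2q\exp(-z^2/(2\beta^2))$.

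For the second bound, first observe that $\norm{\bx}_\infty/\norm{\bx}_2=\norm{P\bxi}_\infty/\norm{P\bxi}_2$ does not depend on $\beta>0$, so we may set $\beta=1$ and put $\by:=P\bxi$; the proof then controls the numerator and the denominator by two separate high-probability events. For the numerator, the first part of the lemma (applied with $\beta=1$ and $z$ replaced by $z/2$) gives $\Pr(\norm{\by}_\infty\ge z/2)\le 2q\exp(-z^2/8)$. For the denominator, $\norm{\by}_2^2=\bxi^\top P\bxi$, and since $P$ is the orthogonal projection onto a subspace of dimension $k:=q-(\text{number of columns of }M)$, rotational invariance of the standard Gaussian gives $\bxi^\top P\bxi\sim\chi^2_k$, with $k>q/2$ by hypothesis. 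A standard chi-squared lower-tail estimate (e.g.\ the Laurent--Massart bound $\Pr(\chi^2_k\le k-2\sqrt{kx})\le e^{-x}$ with $x=k/16$) gives $\Pr(\norm{\by}_2^2\le k/2)\le e^{-k/16}\le e^{-q/32}\le e^{-q/48}$.

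Finally, on the complement of these two events we have $\norm{\by}_2>\sqrt{k/2}>\sqrt{q}/2$ and $\norm{\by}_\infty<z/2$, hence $\norm{\by}_\infty/\norm{\by}_2<(z/2)/(\sqrt{q}/2)=z/\sqrt{q}$; a union bound over the two events then yields $\Pr(\norm{\bx}_\infty/\norm{\bx}_2\ge z/\sqrt{q})\le 2q\exp(-z^2/8)+\exp(-q/48)$. The computations are all routine; the only points that need a little care are (i) identifying $\bxi^\top P\bxi$ as exactly $\chi^2_k$ with $k>q/2$ — which uses that $MM^\top$ is a projection whose rank equals the number of columns of $M$, so $\mathrm{rank}(P)=q$ minus that number — and (ii) splitting $z$ and choosing the chi-squared deviation parameter so that the constants line up with the stated bound. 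I expect (ii), the constant bookkeeping, to be the only mildly delicate step.
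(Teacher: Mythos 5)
Your proposal is correct and follows essentially the same route as the paper's proof: a coordinate-wise Gaussian tail bound plus a union bound for the $\ell_\infty$ part, and a lower-tail bound on $\norm{P\bxi}_2^2$ (viewed as a chi-squared variable of dimension $>q/2$) combined with the $\ell_\infty$ bound for the ratio. The only cosmetic differences are that you normalize $\beta=1$ up front rather than substituting at the end, and you invoke the Laurent--Massart chi-squared bound where the paper cites a Gaussian-norm tail bound; both yield constants that fit within $\exp(-q/48)$.
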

\begin{proof}[Proof of \lemref{lem:rescaledGaussian}]
 $\bx$ has a zero-mean Gaussian distribution, with covariance matrix $\beta^2(I-M M^\top)^2=\beta^2(I-M M^\top)$. In particular, for any fixed index $l$, the coordinate $x_{l}$ is a zero-mean univariate Gaussian with variance $\beta^2\be_{l}^\top (I-M M^\top)\be_{l}=\beta^2(1-\norm{\be_{l}^\top M}^2)\leq \beta^2$. Therefore, by a standard Gaussian tail bound,
$\Pr\left(|x_{l}|\geq z\right)\leq2\exp\left(-{z^2}/{2\beta^2}\right)$ for any $z>0$. Applying a union bound over all indices $l\in [q]$, we get
    \begin{equation}\label{eq:infbound}
    \Pr\left(\norm{\bx}_{\infty}\geq z\right)~\leq~ 2q\exp\left(-\frac{z^2}{2\beta^2}\right)~,
    \end{equation}
    from which the first inequality in the lemma follows. 

    As for the second inequality, note that since $M$ has at most $q/2$ orthonormal columns, then $I-MM^\top$ is a projection matrix to a subspace of $\reals^q$ of dimensionality at least $q/2$. Thus, $\norm{(I-MM^\top)\bxi}$ has the same distribution as the norm of a standard Gaussian random variable on $\reals^s$ where $s\geq q/2$. Using a standard tail lower bound for such Gaussian norms (see for example \cite[Lemma B.12]{shalev2014understanding}), it follows that
\begin{align*}
\Pr\left(\norm{(I-MM^\top)\bxi}^2\leq \frac{1}{2}\cdot \frac{q}{2}\right)
&\leq
\Pr\left(\norm{(I-MM^\top)\bxi}^2\leq \frac{1}{2}\cdot s\right)
\leq\exp\left(-\frac{s}{24}\right)
\\&\leq\exp\left(-\frac{q}{48}\right)~.
\end{align*}   
    Therefore, since $\bx=\beta(I-MM^\top)\bxi$, 
    \[
    \Pr\left(\norm{\bx}_2\leq \frac{\beta\sqrt{q}}{2}\right)~=~
    \Pr\left(\norm{(I-MM^\top)\bxi}^2\leq \frac{q}{4}\right)~\leq~\exp\left(-\frac{q}{48}\right)~.
    \]
    Combining this with \eqref{eq:infbound} using a union bound, it follows that
    \[
    \Pr\left(\frac{\norm{\bx}_{\infty}}{\norm{\bx}_2}\geq \frac{2z}{\beta\sqrt{q}}\right)~\leq~ 2q\exp\left(-\frac{z^2}{2\beta^2}\right)+\exp\left(-\frac{q}{48}\right)~.    
    \]
    Substituting $z$ instead of $2z/\beta$ and simplifying a bit, the result follows.
\end{proof}

\begin{lemma} \label{lem: exp tail bound}
Let $\xi_1,\dots,\xi_T$ be a sequence of random variables such that for all $t\in[T]:~\xi_t\mid \xi_{1},\dots,\xi_{t-1}\sim \Ncal(0,\sigma_t^2)$, and further assume that $\sigma_t^2\leq\beta^2$ for all $t$. Then
with probability at least $1-\delta:~\sum_{t=1}^{T}\xi_t^2\leq 8\beta^2(T+\log(1/\delta))$.
\end{lemma}

\begin{proof}[Proof of \lemref{lem: exp tail bound}]
Denote $S:=\sum_{t=1}^{T}\xi_t^2,~V:=\E[S]=\sum_{t=1}^{T}\sigma_t^2\leq \beta^2 T$ (by the law of total expectation). Fix $\lambda\in[0,1/4\beta^2]$ and let
\[
M_t:=\exp(\lambda\sum_{i=1}^{t}\xi_i^2)\prod_{i=1}^{t}\sqrt{1-2\lambda\sigma_i^2}~,
\]
where we also let $M_0:=1$. Then
\begin{align*}
\E[M_t\mid \xi_1,\dots,\xi_{t-1}]=\exp(\lambda\sum_{i=1}^{t}\xi_i^2)\prod_{i=1}^{t}\sqrt{1-2\lambda\sigma_i^2}\cdot \E[e^{\lambda\xi_t^2}\mid \xi_1,\dots,\xi_{t-1}]\sqrt{1-2\lambda\sigma_t^2}=M_{t-1}~,
\end{align*}
thus $(M_t)_{t=0}^{T}$ is a martingale, and in satisfies $\E[M_T]=1$. By Markov's inequality, for any $z>0$,
\begin{align}
\Pr(S\geq z)=\Pr(e^{\lambda S}\geq e^{\lambda z})\leq e^{-\lambda z}\E[e^{\lambda S}]&=e^{-\lambda z}\underset{=\E[M_T]=1}{\underbrace{\E\left[e^{\lambda S}\prod_{t=1}^{T}\sqrt{1-2\lambda\sigma_t^2}\right]}}\cdot\left(\prod_{t=1}^{T}\sqrt{1-2\lambda\sigma_t^2}\right)^{-1}
\nonumber\\&=
\exp\left(-\lambda z-\frac{1}{2}\sum_{t=1}^{T}\log(1-2\lambda\sigma_t^2)\right)~.\label{eq: markov}
\end{align}
To simplify the bound, we note that $\lambda\leq 1/4\beta^2$ ensures that $2\lambda\sigma_t^2\leq\half$, so we can use the numerical inequality $-\log(1-a)\leq 2a$ (which holds for $a\in[0,\half]$) and get that $-\half\sum_{t=1}^{T}\log(1-2\lambda\sigma_t^2)\leq 2\lambda V$. Plugged into \eqref{eq: markov}, after a change of variable we see that
\[
\Pr(S\geq V+z)\leq \exp(-\lambda z+\lambda V)~.
\]
Concretely, setting $\lambda=1/8\beta^2$ and $z=\beta^2T+8\beta^2\log(1/\delta)$ yields
\begin{align*}
\Pr\left(\sum_{t=1}^{T}\xi_t^2\geq 8\beta^2(T+\log(1/\delta))\right)&\leq
\Pr\left(S\geq V+\beta^2T+8\beta^2\log(1/\delta))\right)
\\&\leq \exp\left(-\frac{T}{8}-\log(1/\delta)+\frac{V}{8\beta^2}\right)
\leq\delta~.
\end{align*}

\end{proof}

\section{Discussion}\label{sec:discussion}

In this paper, we studied the oracle complexity of solving matrix games where (at least) one of the constraint sets is the simplex. This well-studied problem class encapsulates canonical tasks, including finding a linear separator and computing a Nash equilibrium in bilinear zero-sum games, as well as some others discussed throughout the paper.
By identifying distinct oracle models corresponding to either one-sided or two-sided multiplication by the matrix $A$, we were able to shed light on previous algorithmic approaches for this task, and to provide several new lower bounds. 

The most notable implications of our work are that alongside the recent upper bounds of \citet{karmarkar2026solving}, our work resolves the complexity of these two canonical tasks on the order of $\tilde{\Theta}(\epsilon^{-2/3})$ (up to log factors), in the natural oracle model in which the interaction with the matrix is granted through multiplications from both sides. Moreover, our lower bound in the one-sided multiplication model points out that any acceleration beyond the classic $O(\epsilon^{-2})$ rate provably requires such two-sided operations.

We note that extending our lower bounds to randomized algorithms remains an open problem. Since our constructions required simulation of the algorithm's responses, it is not immediately clear how this can be achieved. We further note that previous extensions of oracle complexity lower bounds to randomized algorithms \citep{nemirovskiyudin1983,carmon2020lower,arjevani2023lower} crucially relied on non-linear modifications of the ``hard'' target function, which is not possible when we restrict ourselves to bilinear functions.

\subsection*{Acknowledgments}
This research is supported in part by European Research Council (ERC) grant 754705. GK is supported by an Azrieli Foundation graduate fellowship.

\bibliographystyle{plainnat}
\bibliography{mybib}

\begin{thebibliography}{42}
\providecommand{\natexlab}[1]{#1}
\providecommand{\url}[1]{\texttt{#1}}
\expandafter\ifx\csname urlstyle\endcsname\relax
  \providecommand{\doi}[1]{doi: #1}\else
  \providecommand{\doi}{doi: \begingroup \urlstyle{rm}\Url}\fi

\bibitem[Arjevani et~al.(2023)Arjevani, Carmon, Duchi, Foster, Srebro, and Woodworth]{arjevani2023lower}
Yossi Arjevani, Yair Carmon, John~C Duchi, Dylan~J Foster, Nathan Srebro, and Blake Woodworth.
\newblock Lower bounds for non-convex stochastic optimization.
\newblock \emph{Mathematical Programming}, 199\penalty0 (1):\penalty0 165--214, 2023.

\bibitem[Assadi et~al.(2022)Assadi, Jambulapati, Jin, Sidford, and Tian]{assadi2022semi}
Sepehr Assadi, Arun Jambulapati, Yujia Jin, Aaron Sidford, and Kevin Tian.
\newblock Semi-streaming bipartite matching in fewer passes and optimal space.
\newblock In \emph{Proceedings of the 2022 Annual ACM-SIAM Symposium on Discrete Algorithms (SODA)}, pages 627--669. SIAM, 2022.

\bibitem[Belloni et~al.(2009)Belloni, Freund, and Vempala]{belloni2009efficient}
Alexandre Belloni, Robert~M Freund, and Santosh Vempala.
\newblock An efficient rescaled perceptron algorithm for conic systems.
\newblock \emph{Mathematics of Operations Research}, 34\penalty0 (3):\penalty0 621--641, 2009.

\bibitem[Boob et~al.(2019)Boob, Sawlani, and Wang]{boob2019faster}
Digvijay Boob, Saurabh Sawlani, and Di~Wang.
\newblock Faster width-dependent algorithm for mixed packing and covering lps.
\newblock \emph{Advances in Neural Information Processing Systems}, 32, 2019.

\bibitem[Carmon et~al.(2019)Carmon, Jin, Sidford, and Tian]{carmon2019variance}
Yair Carmon, Yujia Jin, Aaron Sidford, and Kevin Tian.
\newblock Variance reduction for matrix games.
\newblock \emph{Advances in Neural Information Processing Systems}, 32, 2019.

\bibitem[Carmon et~al.(2020{\natexlab{a}})Carmon, Duchi, Hinder, and Sidford]{carmon2020lower}
Yair Carmon, John~C Duchi, Oliver Hinder, and Aaron Sidford.
\newblock Lower bounds for finding stationary points i.
\newblock \emph{Mathematical Programming}, 184\penalty0 (1):\penalty0 71--120, 2020{\natexlab{a}}.

\bibitem[Carmon et~al.(2020{\natexlab{b}})Carmon, Jambulapati, Jiang, Jin, Lee, Sidford, and Tian]{carmon2020acceleration}
Yair Carmon, Arun Jambulapati, Qijia Jiang, Yujia Jin, Yin~Tat Lee, Aaron Sidford, and Kevin Tian.
\newblock Acceleration with a ball optimization oracle.
\newblock \emph{Advances in Neural Information Processing Systems}, 33:\penalty0 19052--19063, 2020{\natexlab{b}}.

\bibitem[Carmon et~al.(2024)Carmon, Jambulapati, Jin, and Sidford]{carmon2024whole}
Yair Carmon, Arun Jambulapati, Yujia Jin, and Aaron Sidford.
\newblock A whole new ball game: A primal accelerated method for matrix games and minimizing the maximum of smooth functions.
\newblock In \emph{Proceedings of the 2024 Annual ACM-SIAM Symposium on Discrete Algorithms (SODA)}, pages 3685--3723. SIAM, 2024.

\bibitem[Clarkson et~al.(2012)Clarkson, Hazan, and Woodruff]{clarkson2012sublinear}
Kenneth~L Clarkson, Elad Hazan, and David~P Woodruff.
\newblock Sublinear optimization for machine learning.
\newblock \emph{Journal of the ACM (JACM)}, 59\penalty0 (5):\penalty0 1--49, 2012.

\bibitem[Dadush et~al.(2020)Dadush, V{\'e}gh, and Zambelli]{dadush2020rescaling}
Daniel Dadush, L{\'a}szl{\'o}~A V{\'e}gh, and Giacomo Zambelli.
\newblock Rescaling algorithms for linear conic feasibility.
\newblock \emph{Mathematics of Operations Research}, 45\penalty0 (2):\penalty0 732--754, 2020.

\bibitem[Dunagan and Vempala(2004)]{dunagan2004polynomial}
John Dunagan and Santosh Vempala.
\newblock A polynomial-time rescaling algorithm for solving linear programs.
\newblock In \emph{Proc. of the ACM Symposium on Theory of Computing (STOC)}, page~28. Citeseer, 2004.

\bibitem[Gentile(2001)]{gentile2001new}
Claudio Gentile.
\newblock A new approximate maximal margin classification algorithm.
\newblock \emph{Journal of Machine Learning Research}, 2\penalty0 (Dec):\penalty0 213--242, 2001.

\bibitem[Gentile and Littlestone(1999)]{gentile1999robustness}
Claudio Gentile and Nick Littlestone.
\newblock The robustness of the p-norm algorithms.
\newblock In \emph{Proceedings of the twelfth annual conference on Computational learning theory}, pages 1--11, 1999.

\bibitem[Hadiji et~al.(2024)Hadiji, Sachs, van Erven, and Koolen]{hadiji2024towards}
H{\'e}di Hadiji, Sarah Sachs, Tim van Erven, and Wouter~M Koolen.
\newblock Towards characterizing the first-order query complexity of learning (approximate) nash equilibria in zero-sum matrix games.
\newblock \emph{Advances in Neural Information Processing Systems}, 36, 2024.

\bibitem[Ibrahim et~al.(2020)Ibrahim, Azizian, Gidel, and Mitliagkas]{ibrahim2020linear}
Adam Ibrahim, Wa{\i}ss Azizian, Gauthier Gidel, and Ioannis Mitliagkas.
\newblock Linear lower bounds and conditioning of differentiable games.
\newblock In \emph{International conference on machine learning}, pages 4583--4593. PMLR, 2020.

\bibitem[Jambulapati et~al.(2019)Jambulapati, Sidford, and Tian]{jambulapati2019direct}
Arun Jambulapati, Aaron Sidford, and Kevin Tian.
\newblock A direct tilde $\{$O$\}$(1/epsilon) iteration parallel algorithm for optimal transport.
\newblock \emph{Advances in Neural Information Processing Systems}, 32, 2019.

\bibitem[Ji et~al.(2021)Ji, Srebro, and Telgarsky]{ji2021fast}
Ziwei Ji, Nathan Srebro, and Matus Telgarsky.
\newblock Fast margin maximization via dual acceleration.
\newblock In \emph{International Conference on Machine Learning}, pages 4860--4869. PMLR, 2021.

\bibitem[Karmarkar et~al.(2025)Karmarkar, O'Carroll, and Sidford]{karmarkar2025solving}
Ishani Karmarkar, Liam O'Carroll, and Aaron Sidford.
\newblock Solving zero-sum games with fewer matrix-vector products.
\newblock \emph{FOCS}, 2025.

\bibitem[Karmarkar et~al.(2026)Karmarkar, O'Carroll, and Sidford]{karmarkar2026solving}
Ishani Karmarkar, Liam O'Carroll, and Aaron Sidford.
\newblock Solving matrix games with near-optimal matvec complexity.
\newblock \emph{arXiv e-prints}, pages arXiv--2601, 2026.

\bibitem[McCulloch and Pitts(1943)]{mcculloch1943logical}
Warren~S McCulloch and Walter Pitts.
\newblock A logical calculus of the ideas immanent in nervous activity.
\newblock \emph{The bulletin of mathematical biophysics}, 5:\penalty0 115--133, 1943.

\bibitem[Nash(1950)]{nash1950equilibrium}
John~F Nash.
\newblock Equilibrium points in n-person games.
\newblock \emph{Proceedings of the national academy of sciences}, 36\penalty0 (1):\penalty0 48--49, 1950.

\bibitem[Nemirovski(2004)]{nemirovski2004prox}
Arkadi Nemirovski.
\newblock Prox-method with rate of convergence o (1/t) for variational inequalities with lipschitz continuous monotone operators and smooth convex-concave saddle point problems.
\newblock \emph{SIAM Journal on Optimization}, 15\penalty0 (1):\penalty0 229--251, 2004.

\bibitem[Nemirovski and Yudin(1983)]{nemirovskiyudin1983}
Arkadi~Semenovich Nemirovski and David~Borisovich Yudin.
\newblock \emph{Problem complexity and method efficiency in optimization.}
\newblock Wiley, 1983.

\bibitem[Nemirovsky(1992)]{nemirovsky1992information}
Arkadi~S Nemirovsky.
\newblock Information-based complexity of linear operator equations.
\newblock \emph{Journal of Complexity}, 8\penalty0 (2):\penalty0 153--175, 1992.

\bibitem[Nesterov(2005{\natexlab{a}})]{nesterov2005excessive}
Yurii Nesterov.
\newblock Excessive gap technique in nonsmooth convex minimization.
\newblock \emph{SIAM Journal on Optimization}, 16\penalty0 (1):\penalty0 235--249, 2005{\natexlab{a}}.

\bibitem[Nesterov(2005{\natexlab{b}})]{nesterov2005smooth}
Yurii Nesterov.
\newblock Smooth minimization of non-smooth functions.
\newblock \emph{Mathematical programming}, 103:\penalty0 127--152, 2005{\natexlab{b}}.

\bibitem[Nesterov(2007)]{nesterov2007dual}
Yurii Nesterov.
\newblock Dual extrapolation and its applications to solving variational inequalities and related problems.
\newblock \emph{Mathematical Programming}, 109\penalty0 (2):\penalty0 319--344, 2007.

\bibitem[Nesterov(2018)]{nesterov2018lectures}
Yurii Nesterov.
\newblock \emph{Lectures on convex optimization}, volume 137.
\newblock Springer, 2018.

\bibitem[Nguyen and Ene(2024)]{nguyen2024multiplicative}
Ta~Duy Nguyen and Alina Ene.
\newblock Multiplicative weights update, area convexity and random coordinate descent for densest subgraph problems.
\newblock In \emph{International Conference on Machine Learning}, pages 37683--37706. PMLR, 2024.

\bibitem[Novikoff(1962)]{novikoff1962convergence}
Albert~BJ Novikoff.
\newblock On convergence proofs on perceptrons.
\newblock In \emph{Proceedings of the Symposium on the Mathematical Theory of Automata}, volume 12-1, pages 615--622. New York, NY, 1962.

\bibitem[Ouyang and Xu(2021)]{ouyang2021lower}
Yuyuan Ouyang and Yangyang Xu.
\newblock Lower complexity bounds of first-order methods for convex-concave bilinear saddle-point problems.
\newblock \emph{Mathematical Programming}, 185\penalty0 (1):\penalty0 1--35, 2021.

\bibitem[Pena and Soheili(2016)]{pena2016deterministic}
Javier Pena and Negar Soheili.
\newblock A deterministic rescaled perceptron algorithm.
\newblock \emph{Mathematical Programming}, 155:\penalty0 497--510, 2016.

\bibitem[Rakhlin and Sridharan(2013)]{rakhlin2013optimization}
Sasha Rakhlin and Karthik Sridharan.
\newblock Optimization, learning, and games with predictable sequences.
\newblock \emph{Advances in Neural Information Processing Systems}, 26, 2013.

\bibitem[Rosenblatt(1958)]{rosenblatt1958perceptron}
Frank Rosenblatt.
\newblock The perceptron: a probabilistic model for information storage and organization in the brain.
\newblock \emph{Psychological review}, 65\penalty0 (6):\penalty0 386, 1958.

\bibitem[Shalev-Shwartz and Ben-David(2014)]{shalev2014understanding}
Shai Shalev-Shwartz and Shai Ben-David.
\newblock \emph{Understanding machine learning: From theory to algorithms}.
\newblock Cambridge university press, 2014.

\bibitem[Shamir(2011)]{shamir2011variant}
Ohad Shamir.
\newblock A variant of azuma's inequality for martingales with subgaussian tails.
\newblock \emph{arXiv preprint arXiv:1110.2392}, 2011.

\bibitem[Sherman(2017)]{sherman2017area}
Jonah Sherman.
\newblock Area-convexity, $l_{\infty}$ regularization, and undirected multicommodity flow.
\newblock In \emph{Proceedings of the 49th Annual ACM SIGACT Symposium on Theory of Computing}, pages 452--460, 2017.

\bibitem[Soheili and Pena(2012)]{soheili2012smooth}
Negar Soheili and Javier Pena.
\newblock A smooth perceptron algorithm.
\newblock \emph{SIAM Journal on Optimization}, 22\penalty0 (2):\penalty0 728--737, 2012.

\bibitem[Vershynin(2012)]{vershynin2012introduction}
Roman Vershynin.
\newblock Introduction to the non-asymptotic analysis of random matrices.
\newblock \emph{Compressed Sensing}, pages 210--268, 2012.

\bibitem[Vershynin(2018)]{vershynin2018high}
Roman Vershynin.
\newblock \emph{High-dimensional probability: An introduction with applications in data science}, volume~47.
\newblock Cambridge university press, 2018.

\bibitem[Wang et~al.(2023)Wang, Hanashiro, Guha, and Abernethy]{wang2023accelerated}
Guanghui Wang, Rafael Hanashiro, Etash~Kumar Guha, and Jacob Abernethy.
\newblock On accelerated perceptrons and beyond.
\newblock In \emph{The Eleventh International Conference on Learning Representations}, 2023.

\bibitem[Yu et~al.(2014)Yu, Kilin{\c{c}}-Karzan, and Carbonell]{yu2014saddle}
Adams~Wei Yu, Fatma Kilin{\c{c}}-Karzan, and Jaime Carbonell.
\newblock Saddle points and accelerated perceptron algorithms.
\newblock In \emph{International Conference on Machine Learning}, pages 1827--1835. PMLR, 2014.

\end{thebibliography}

\end{document}